\setlist{itemjoin ={,\enspace}, itemjoin* = {, and\enspace}}
\newcommand{\bpara}[1]		{\medskip \noindent {\bf #1}}
\newcommand{\bparab}[1] 	{\noindent {\bf #1}}
\newcommand{\spara}[1]		{\smallskip \noindent {\bf{#1}}}
\newcommand{\ipara}[1]		{\medskip \noindent {---\emph{#1}}}
\definecolor{modulo}{RGB}{72,0,255}
\definecolor{tof}{RGB}{255,0,63}
\theoremstyle{plain}
\newtheorem*{theorem*}{Theorem}
\newtheorem{theorem}{Theorem}
\newtheorem{definition}{Definition}
\newtheorem{lemma}{Lemma}
\newtheorem{cor}{Corollary}
\newcommand{%
     \scalebox{}{\input{}}  
}[2]{%
     \scalebox{#1}{\input{#2}}  
}
\def\N					{\mathbb N}
\def\Z					{\mathbb Z}
\def\R					{\mathbb R}
\def\iR					{\in \mathbb R}
\def\e					{{e}}
\def\ind					{\mathbbmtt{1}}
\def\DE					{\stackrel{\rm{def}}{=}}
\def\ts{T_\mathsf{S}}
\def\cndp{P}
\def\cndpAM{P_{\mathsf{AM}}}
\def\blb{\Omega_{\mathsf{U},\mathsf{L}}}
\def\os{\Omega_{\mathsf{S}}}
\def\ol{\Omega_{\mathsf{L}}}
\def\ou{\Omega_{\mathsf{U}}}
\def\oup{\Omega_{\mathsf{U}}^{g}}
\def\ind{{{\pmb{1}}}}
\def\tbp{\mathsf{BP}}
\def\ql{Q_{\mathsf{L}}}
\def\qu{Q_{\mathsf{U}}}
\def\qlp{Q_{\mathsf{L}}^{g}}
\def\qup{Q_{\mathsf{U}}^{g}}
\def\qlb{Q_{\mathsf{U},\mathsf{L}}}
\def\TUS					{T_{\mathsf{US}}}
\def\TNQ					{T_{\mathsf{NS}}}
\def\TFD					{T_{\mathsf{FD}}}
\def\adc					{ADC\xspace}
\def\adcs					{ADCs\xspace}
\def\usf					{USF\xspace}
\def\usalg					{{{\selectfont\texttt{US}--\texttt{Alg}}}\xspace}
\def\madcs					{{$\mathscr{M}_\lambda$--{\fontsize{11pt}{11pt}\selectfont\texttt{ADCs}}}\xspace}
\def\madc					{{$\mathscr{M}$\hspace{-0.15em}--{\selectfont\texttt{ADC}}}\xspace}
\def\usfp					{{\fontsize{11pt}{11pt}\selectfont\texttt{US}-\texttt{FP}}\xspace}
\def\usadc{\madc}
\def\l						{\left(}
\def\r						{\right)}
\def\ind					{\mathds{1}}
\def\etal					{\emph{et al}.~}
\def\eg					{\emph{e.g.~}}
\def\ie					{\emph{i.e.}}
\def\ddvmat				{\bs{\EuScript{D}}_{\hat{\bs{\bar{\varphi}}}}}
\newcommand\bs[1]			{\boldsymbol{#1}}
\newcommand\eset[3]		{\mathbb{E}_{\sqb{{#1},{#2}},{#3}}}
\newcommand\mcal[1]		{\mathcal{#1}}
\newcommand{\LL}[1] {L^{#1}\rob{\R}} 
\newcommand{\ft}[2] {\widehat{#1}\rob{#2}} 
\newcommand{\dft}[2] {\widehat{#1}\sqb{#2}} 
\newcommand{\ftsub}[3] {\widehat{#1}_{\mathsf{#2}}\rob{#3}} 
\newcommand{\dftsub}[3] {\widehat{#1}_{\mathsf{#2}}\sqb{#3}} 
\newcommand{\sqb}[1]    {\left[ #1 \right]} 
\newcommand{\cb}[1]{\left\lbrace #1 \right\rbrace} 
\newcommand\rob[1]      {\l #1 \r} 
\newcommand{\vrb}[1]{\left\lvert #1 \right\rvert}
\newcommand{\PW}[1]{\mathsf{PW}_{#1}}
\newcommand{\mat}[1]    {\mathbf{#1}}
\newcommand\fig[1]			{Fig.~\ref{#1}}
\newcommand\rftab[1]			{Table~\ref{#1}}
\newcommand{\fes}[1]			{\left[\kern-0.15em\left[#1\right]\kern-0.15em\right]}
\newcommand{\fe}[1]		{\left[\kern-0.30em\left[#1\right]\kern-0.30em\right]}
\newcommand{\flr}[1]		{\left\lfloor #1 \right\rfloor}
\newcommand{\cil}[1]       {\left \lceil #1 \right \rceil}
\newcommand{\nrm}[2]   {{||#1||}_{#2}} 
\newcommand{\maxn}[1]   {{||#1||}_\infty} 
\newcommand{\MO}[1]		{\mathscr{M}_\lambda ({#1} )}
\newcommand{\MONI}[1]		{{\widetilde{\mathscr{M}}}_\lambda ({#1} )}
\newcommand{\MOh}[0]{\mathscr{M}_{\boldsymbol{\mathsf{H}}}}
\newcommand{\VO}[1]		{\varepsilon_{#1}}
\newcommand{\BL}[1]		{#1 \in \mathcal{B}_{\Omega}}
\newcommand{\BP}[1]		{#1\in\mathcal{B}_{\left(\ol, \ou \right)}}
\newcommand{\EQc}[1]		{\stackrel{(\ref{#1})}{=}}
\newcommand{\mse}[2]    {\EuScript{E}\rob{{#1,#2}}}
\newcommand{\equalref}[1] {\stackrel{\textsf{#1}}{=}}
\renewcommand\bar\underline
\renewcommand\hat\widehat
\renewcommand\geq\geqslant
\renewcommand\leq\leqslant
\renewcommand\Psi\ddvmat
\renewcommand\tilde\widetilde
\def\moverlay{\mathpalette\mov@rlay}
\def\mov@rlay#1#2{\leavevmode\vtop{%
   \baselineskip\z@skip \lineskiplimit-\maxdimen
   \ialign{\hfil$\m@th#1##$\hfil\cr#2\crcr}}}
\newcommand{\charfusion}[3][\mathord]{
    #1{\ifx#1\mathop\vphantom{#2}\fi
        \mathpalette\mov@rlay{#2\cr#3}
      }
    \ifx#1\mathop\expandafter\displaylimits\fi}
\newcommand{\cupdot}{\charfusion[\mathbin]{\cup}{\cdot}}
\begin{document}

\title{Unlimited Sampling of Bandpass Signals: \\
Computational Demodulation via Undersampling}

\author{Gal Shtendel, Dorian Florescu, and
        Ayush Bhandari

\thanks{This work is supported by the UK Research and Innovation council's \emph{Future Leaders Fellowship} program ``Sensing Beyond Barriers'' (MRC Fellowship award no.~MR/S034897/1). Project page for (future) release of hardware design, code and data: \href{https://bit.ly/USF-Link}{\texttt{https://bit.ly/USF-Link}}.}
\thanks{The authors are with the Dept. of Electrical and Electronic Engineering, Imperial College London, South Kensington, London SW7 2AZ, UK. (Email: \texttt{\{g.shtendel21,d.florescu,a.bhandari\}@imperial.ac.uk} or \texttt{ayush@alum.mit.edu}).}

%
\thanks{Manuscript Submitted: 20XX.}}

\markboth{Journal of \LaTeX\ Class Files,~Vol.~XX, No.~X, Month~20XX}%
{Manuscript: Unlimited Sensing Radar}
%



\maketitle

\tableofcontents

\newpage

\begin{abstract}
Bandpass signals are an important sub-class of bandlimited signals that naturally arise in a number of application areas but their high-frequency content poses an acquisition challenge. Consequently, ``Bandpass Sampling Theory'' has been investigated and applied in the literature. 
In this paper, we consider the problem of modulo sampling of bandpass signals with the main goal of sampling and recovery of high dynamic range inputs. Our work is inspired by the  Unlimited Sensing Framework (USF). In the USF, the modulo operation folds high dynamic range inputs into low dynamic range, modulo samples. This fundamentally avoids signal clipping. Given that the output of the modulo nonlinearity is non-bandlimited, bandpass sampling conditions never hold true. Yet, we show that bandpass signals can be recovered from a modulo representation despite the inevitable aliasing. Our main contribution includes proof of sampling theorems for recovery of bandpass signals from an undersampled representation, reaching sub-Nyquist sampling rates. On the recovery front, by considering both time- and frequency-domain perspectives, we provide a holistic view of the modulo bandpass sampling problem. On the hardware front, we include ideal, non-ideal and generalized modulo folding architectures that arise in the hardware implementation of modulo analog-to-digital converters. Numerical simulations corroborate our theoretical results. Bridging the theory--practice gap, we validate our results using hardware experiments, thus demonstrating the practical effectiveness of our methods. 
\end{abstract}

\begin{IEEEkeywords}
Analog-to-digital conversion (ADC), approximation, bandpass sampling, modulo, Shannon sampling theory.
\end{IEEEkeywords}

%

\IEEEpeerreviewmaketitle

\newpage

\onehalfspacing
\section{Introduction}

\IEEEPARstart{D}{igital} signal acquisition methodology is the workhorse of all modern world electronic systems. This technology is pivoted on 
Shannon's Sampling Framework wherein point-wise samples are digitized using the so-called analog-to-digital converter or the \adc. For the theory and practice to work in perfect tandem, it is imperative that the input signal's dynamic range is matched to the \adc's dynamic range. When this is not the case, even in noiseless conditions, the resulting samples are distorted due to saturation or clipping. Such a scenario leading to permanent loss of information poses a fundamental bottleneck in all digital systems.

\begin{figure*}[!t]
    \centering
    \includegraphics[width =1\textwidth]{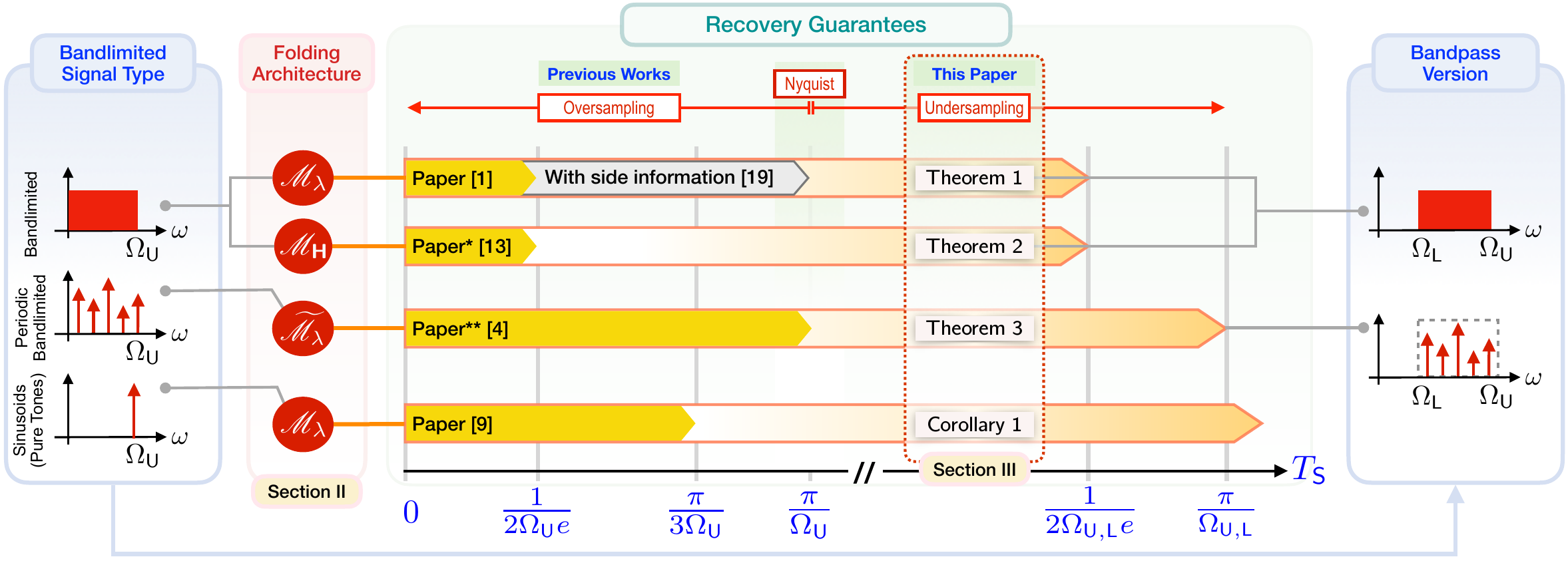}
    \caption{Summary of recovery guarantees for commonly occurring types of bandlimited signals in the context of unlimited sensing framework (\usf) together with different folding architectures denoted by $\mathscr{M}_{\lambda},\tilde{\mathscr{M}}_{\lambda}, \MOh$. We also summarize our contributions in terms of extending the results for bandlimited signal type to their corresponding bandpass case, especially in the undersampled regime.}
    \label{fig:LS}
\end{figure*}

To overcome this bottleneck, the {\bf Unlimited Sensing Framework} (\usf) \cite{Bhandari:2017:C,Bhandari:2020:Pat,Bhandari:2020:Ja,Bhandari:2021:J,Bhandari:2022:J,Florescu:2022:Ja} has been recently proposed as an alternative digital acquisition technology. Unlike previous sampling approaches where \emph{hardware} and \emph{reconstruction algorithms} were mostly decoupled, the \usf is pivoted on the theme of \emph{computational sensing} \cite{Bhandari:2022:Book} as it harnesses a co-design of hardware and algorithms. In particular, 
\begin{enumerate}[leftmargin = 30pt, label = $\bullet$]
  \item {\bf Encoding.} Signal folding\footnote{To get a sense of the hardware functionality of the modulo \adcs built in \cite{Bhandari:2021:J}, we refer the reader to the live YouTube demonstration for (a) Sinusoidal Waveform 
  \href{https://youtu.be/JuZg80gUr8M}{\texttt{https://youtu.be/JuZg80gUr8M}} 
  and (b) Arbitrary Bandlimited Waveform 
  \href{https://youtu.be/prV40WlzHh4}{\texttt{https://youtu.be/prV40WlzHh4}} .} 
  is implemented in the hardware pipeline so that any arbitrary high-dynamic-range (HDR), continuous-time input is folded or wrapped within the dynamic range of the \adc. This results in low-dynamic-range samples and entails that signal clipping or saturation problem is eliminated in the encoding stage.

\item {\bf Decoding.} Given folded samples, the input HDR signal is mathematically \emph{unfolded} using recovery algorithms. To this end, one of the first recovery guarantees was provided in \cite{Bhandari:2017:C,Bhandari:2020:Ja}. It was shown that bandlimited signals can be recovered from a constant factor oversampling of modulo samples with a sampling rate that is independent of the threshold $\lambda$. We refer to \fig{fig:LS} for a summary of known recovery rates for different bandlimited signal types \cite{Bhandari:2017:C,Bhandari:2018:C,Bhandari:2020:Ja,Bhandari:2021:J,Shtendel:2022:C}. 
\end{enumerate}

To validate the practical utility of the \usf, we have implemented signal folding in hardware using different non-linearities including 
\begin{enumerate}[leftmargin =30pt,label=\ding{224}]
\item Modulo \adcs \cite{Bhandari:2021:J,Bhandari:2022:J,Bhandari:2022:C,Beckmann:2022:J,Liu:2022},
\item Generalized Modulo \adcs \cite{Florescu:2022:J,Florescu:2022:Cb,Florescu:2022:C}, as well as,  
\item Time-Encoding Architectures \cite{Florescu:2022:Ja,Florescu:2022:Ca,Florescu:2021:C}. 
\end{enumerate}
Let $\lambda>0$ denote the saturation threshold of the modulo \adc (\madc). Hardware experiments \cite{Bhandari:2021:J} have established that inputs as large as $25\lambda$ to $30\lambda$ can be recovered in the presence of non-idealities, system noise and quantization. This was achieved by a Fourier-domain recovery algorithm referred to as the \emph{Fourier-Prony} or \usfp method. The \usfp method is
\begin {enumerate*} [label={\alph*})]%
  \item non-iterative
  \item agnostic to $\lambda$
  \item operates at the tightest possible sampling rates
  \item is empirically robust to system noise \cite{Bhandari:2021:J,Bhandari:2022:J} and outliers \cite{Bhandari:2022:C},
\end{enumerate*}
thus making it particularly attractive for real-world experiments. For theoretical guarantees in the case of noise for the nonideal scenario, the generalized modulo \adc and associated sampling model introduce an additional parameter called \emph{hysteresis} which was used in the context of uniform sampling \cite{Florescu:2022:J} and time encoding \cite{Florescu:2022:Ja}.

\bpara{Related Work on Unlimited Sampling.} Several follow-up papers, summarised below, have adopted the \usf strategy for modulo sensing and reconstruction \eg \cite{Ordentlich:2018:J,Romanov:2019:J,Rudresh:2018:C,Musa:2018:C,Prasanna:2021:J,Ordonez:2021:J,Gong:2021:J,Azar:2022:C,Weiss:2022:J}, predominantly focused on bandlimited signals (barring \cite{Bhandari:2020:C,Bhandari:2022:J})

Previous research on bandlimited signals on a real line \cite{Bhandari:2017:C,Bhandari:2020:Ja}, periodic\footnote{We remind the reader that the transform domain separation principle in \cite{Bhandari:2021:J} is also applicable to aperiodic signals. The periodicity assumption leads to an efficient implementation via the Fast Fourier Transform (FFT) algorithm.} bandlimited signals \cite{Bhandari:2021:J} and sinusoidal tones \cite{Shtendel:2022:C} (see \fig{fig:LS}). Ordentlich \etal studied modulo sampling via rate-distortion theory \cite{Ordentlich:2018:J}, and also proposed a recovery method \cite{Romanov:2019:J} assuming a number of unfolded samples known \emph{a priori}.  
Gong \etal \cite{Gong:2021:J} proposed a multi-channel method with complex-valued moduli to recover HDR inputs.  
Azar \etal \cite{Azar:2022:C} extended the Fourier-Prony approach \cite{Bhandari:2021:J} with additional time-domain constraints. A robust algorithm based on optimization was presented in \cite{Guo:2023:C} which was validated via hardware experiments.

\bpara{Motivation for Bandpass Sampling Theory.} The topic of \emph{Bandpass Sampling} has been studied in detail from the perspective of sampling theory \eg \cite{Brown:1983:J,Vaughan:1991:J,Wahab:2022:J} as well as application-centric contexts, \eg sonar \cite{Grace:1968:J,Knight:1981:J}, radar systems \cite{Waters:1982:J,Rice:1982:J}, ultrasound imaging \cite{Kang:2022:J} and optics \cite{Hirabayashi:2002:J,Larkin:1996:J}. 

Since bandpass functions are also bandlimited, a natural first thought is to sample them at their Nyquist rate. However, this requires the \adc to operate at very high sampling rates \cite{Vaughan:1991:J}, which causes implementation challenges and imply higher power consumption in the \adc \cite{Wahab:2022:J}. Furthermore, this simplistic view does not leverage the fact that a bandpass signal contains no information in the lowpass region. 

\ipara{Existing Strategies for Bandpass Sampling.} 
Researchers developed the following strategies for efficient acquisition. 
\begin{enumerate}[leftmargin = 30pt, label = $\bullet$]
  \item Hardware-only approaches \cite{Fazi:1997:C} use mixers to demodulate the signal so that it is concentrated in the baseband region. This leads to lower sampling rates but results in complex and more expensive hardware implementation \cite{Wahab:2022:J}. 
  \item Algorithmic approaches achieve an equivalent of \emph{hardware demodulation} (above) by aliasing the bandpass signal into its passband via undersampling \cite{Knight:1981:J,Brown:1983:J,Vaughan:1991:J,Wahab:2022:J}. This is schematically explained in \fig{fig:BPsamp} and the focus of our work.
\end{enumerate}

 \begin{figure}[!t]
\centering
\includegraphics[width =0.65\textwidth]{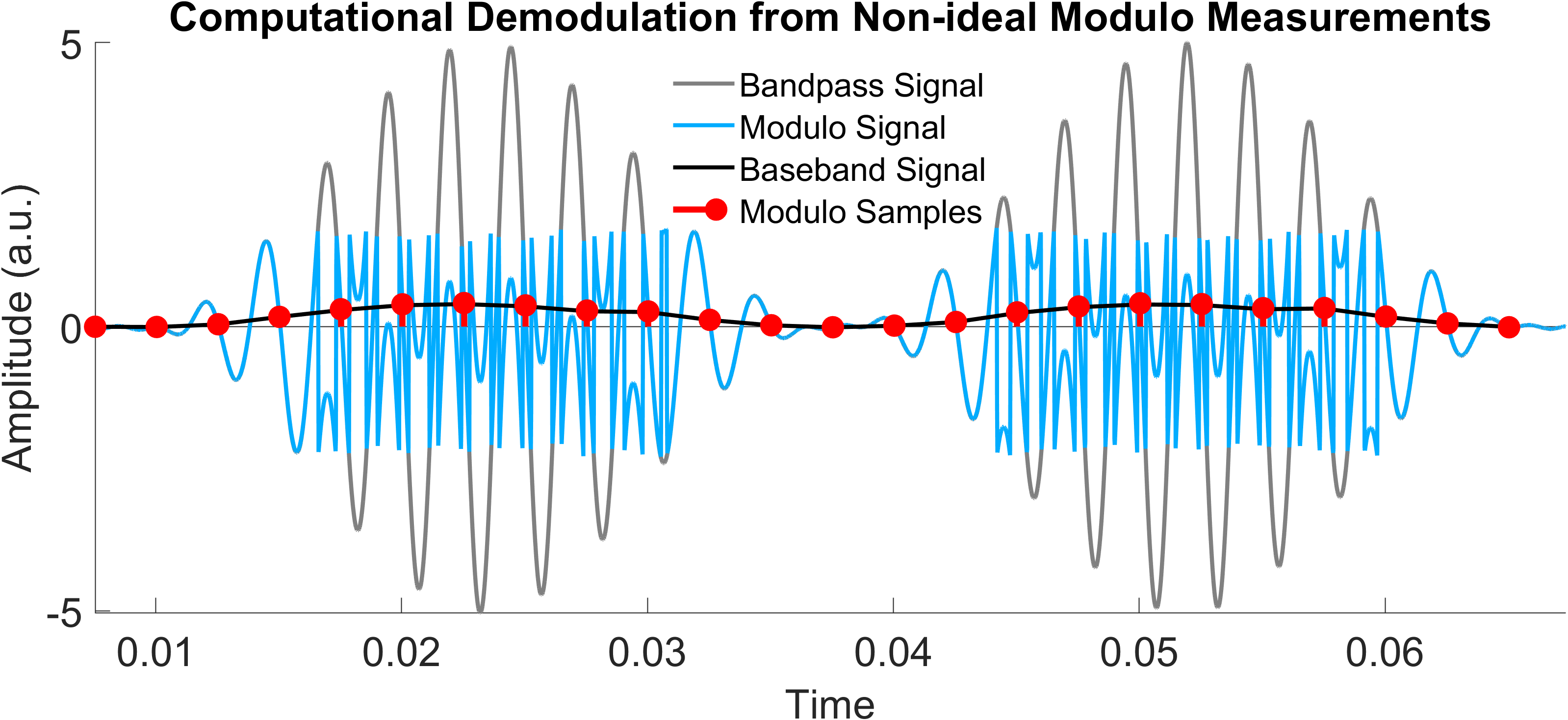}
\caption[width =1\textwidth]{An amplitude modulated signal, its associated modulo signal and undersampled modulo samples. The frequencies of the message and carrier signals are $f_{\mathsf{M}}=35\ \mathrm{Hz}$ and $f_{\mathsf{C}}=400\ \mathrm{Hz}$, respectively. For a sampling period $\ts=2.5\ \mathrm{ms}$, modulo samples coincide with underlying baseband signal samples, which directly results in demodulation.}
\label{fig:exp0}
\end{figure}

\ipara{Unlimited Sampling of Bandpass Signals.} The widespread prevalence of bandpass signals, also studied in the hardware context necessitates, the development of associated HDR \adcs  (cf.~\cite{Fazi:1997:C,Kappes:2003:J,Seo:2003:J}). Such approaches do not leverage the co-design approach \eg \usf. From the computational sensing perspective, bandpass signals have not yet been incorporated in \usf, yet they are highly relevant in known and emerging applications such as \usf based radar \cite{Feuillen:2022:C,Feuillen:2023:C}.

On the one hand, modulo folding renders a bandlimited signal, non-bandlimited \cite{Bhandari:2017:C} so it is expected that undersampling in the spirit of \cite{Brown:1983:J,Vaughan:1991:J,Wahab:2022:J,Waters:1982:J,Rice:1982:J,Hirabayashi:2002:J,Kang:2022:J}, is likely to be a theoretical dead-end. On the other hand, in our hardware experiments with \madc and amplitude-modulated (AM) bandpass signals, we have observed something surprising; in certain settings, undersampling an AM signal with \madc allows us to directly access its baseband version \emph{without} any need for recovery algorithms. A hardware example with our \madc depicting this scenario is shown in \fig{fig:exp0}. The preservation of the bandpass spectral features in this undersampled scenario, even though the energy of the modulo signal is distributed over infinite frequency support as illustrated in \fig{fig:MDP}, indicates that there may be a certain structured form of aliasing. This observation raises the following questions 

\begin{figure}[t]
    \centering
    \includegraphics[width = 0.55\columnwidth]{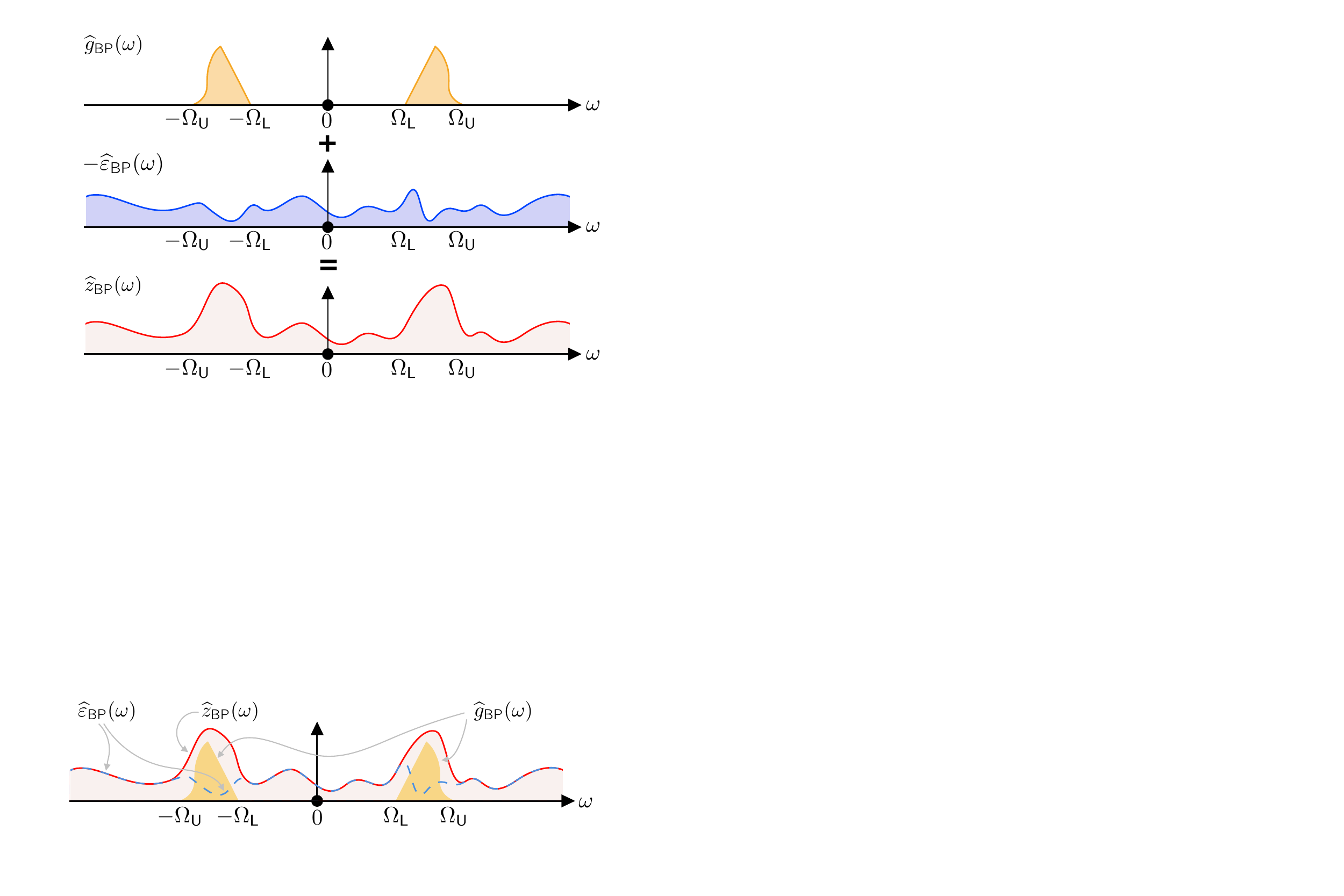}
    \caption{The Fourier-domain representation of a modulo signal. The sum of a bandpass component $\ftsub{g}{\tbp}{\omega}$ (upper) and a residual component $-\ftsub{\varepsilon}{g_{\tbp}}{\omega}$ (middle) composes the modulo signal
    $\ftsub{z}{\tbp}{\omega}=\ftsub{g}{\tbp}{\omega}- \ftsub{\varepsilon}{g_{\tbp}}{\omega}$ (lower).}
    \label{fig:MDP}
\end{figure}

\begin{enumerate}[leftmargin=40pt, label = $\textrm{Q}_{\arabic*})$]
  \item Under what conditions one can recover a bandpass signal from undersampled, folded samples?
  
\item Can the undersampling strategy be applied to different signal folding architectures arising in practice?
  
  \item Can the related recovery algorithms be validated using real experiments with the \madc?
\end{enumerate}

\bpara{Contributions.} In this work, we formulate and address the problem of HDR sampling and recovery of bandpass signals with modulo non-linearities, especially with an emphasis on undersampling. Our main contributions are as follows,
\begin{enumerate}[label = $\arabic*)$,leftmargin=30pt,itemsep=0pt]
    \item Given folded samples, we provide recovery guarantees for bandpass signals acquired at low sampling rates that enable sub-Nyquist sampling (see \fig{fig:LS}).
    
    \item By considering different modulo folding architectures that explain hardware observations, namely, (a) ideal, (b) generalized modulo non-linearities, and (c) non-ideal architectures, our work offers algorithmic flexibility for practical solutions. 
    
    \item Beyond numerical simulations, we validate our theoretical results and algorithms via hardware experiments, thus closing the gap between theory and practice. This also establishes the empirical robustness of our recovery methods. 
\end{enumerate}
The far-reaching goal of our work is to enable HDR computational sampling of bandpass signals, which arise in a number of applications \cite{Grace:1968:J,Knight:1981:J,Waters:1982:J,Rice:1982:J,Kang:2022:J,Hirabayashi:2002:J,Larkin:1996:J} including \usf-Radar \cite{Feuillen:2022:C,Feuillen:2023:C}. Compared with the bandlimited signals, our recovery guarantees are summarized in \fig{fig:LS}.

\section{Signal Folding Architectures} 
\label{sec:SmpPipe}
In this section, we revisit the ideal \cite{Bhandari:2017:C,Bhandari:2020:Ja}, generalized \cite{Florescu:2022:J} and non-ideal  modulo operators \cite{Bhandari:2021:J}, motivated from practical scenarios and hardware experiments.  We also summarize the corresponding recovery algorithms that will be re-purposed to handle bandpass signals, thus increasing the applicability to a wider range of signal folding architectures.       

\subsection{Ideal Modulo Folding \texorpdfstring{$(\mathscr{M}_\lambda)$}{Ml}}

\bpara{Acquisition.} In the ideal case, folded samples are acquired using the centered modulo mapping defined by,
\begin{equation}
\label{eq:mod}
\mathscr{M}_{\lambda}:g \mapsto 2\lambda \left( {\fe{ {\frac{g}{{2\lambda }} + \frac{1}{2} } } - \frac{1}{2} } \right), \quad \fes{g} \DE g - \flr{g}
\end{equation}
where $\flr{g}$ defines the floor function\footnote{Greatest integer less than or equal to $g$ \ie, 
$\flr{g} = \max \left\{ {k \in \mathbb{Z} \mid k \leqslant g} \right\}$.}. 
Even when $g \gg \lambda$, the range of $\mathscr{M}_\lambda$ is bounded or $\MO{g\rob{t}} \in [-\lambda, \lambda)$. For a sampling period $\ts >0$, we denote modulo samples by $y\sqb{k}=\MO{g\rob{k\ts}}$ while $\gamma\sqb{k}= g\rob{k\ts}$. A hardware prototype of the modulo \adc was presented in \cite{Bhandari:2021:J}.

\bpara{Reconstruction Strategy}. From the modulo decomposition, 
\begin{equation}
\label{eq:MDP}
g\rob{t} = \MO{g\rob{t}} + \VO{g}\rob{t}, \quad \VO{g} \in 2\lambda\Z
\end{equation}
we can deduce that unfolding $y\sqb{k}$ is equivalent to recovering $r\sqb{k} = \VO{g}\rob{k\ts}\in 2\lambda\Z
$ (residue) which is a simple function \cite{Bhandari:2020:Ja}. This recovery approach was first introduced in the Unlimited Sampling Algorithm or \usalg \cite{Bhandari:2020:Ja}. Key to the recovery in \usalg is the idea that $r\sqb{k}$ can be extracted from $y\sqb{k}$. For any $\BL{g}$, its samples are highly correlated and hence, $\|\Delta^N \gamma\|_\infty \leq {\left( {\ts \Omega \e} \right)^N} \| \gamma\|_\infty$. This implies that oversampling shrinks the higher order differences, $\rob{\Delta^{N}  \gamma}\sqb{k}$. Choosing,  
\begin{equation}
    \ts \leq \TUS = \frac{1}{2\Omega e} \mbox{ and } N^\star \geq \left\lceil {\frac{{\log \lambda  - \log \beta_g}}{{\log \left( {\ts\Omega \e} \right)}}} \right\rceil
    \label{eq:TUS}
\end{equation}
with $\beta_g\geq \maxn{g}, \beta_g \in 2\lambda\Z$ it is guaranteed that $\maxn{\Delta^{N^\star}  \gamma}< \lambda$ \cite{Bhandari:2020:Ja}. 
For $\{\ts,N^\star\}$ above, it follows that $\Delta^{N^\star} \gamma = \MO{\Delta^{N^\star} \gamma} = \MO{\Delta^{N^\star} y}$ or  $\Delta^{N^\star}\varepsilon_g = \MO{\Delta^{N^\star} y} - \Delta^{N^\star} y  $. Thereon, stable inversion of $\Delta^{N^\star}$ is facilitated by leveraging the restriction on the residual function's amplitudes or $\VO{g}\rob{kT} \in 2\lambda\Z$ and growth properties of $\BL{g}$. Finally, $g\rob{t}$ is obtained by interpolating $\gamma\sqb{k} = y\sqb{k}+r\sqb{k}$. 

\subsection{Generalized Modulo Nonlinearity \texorpdfstring{$(\MOh)$}{Mlh}}
A generalized modulo architecture or $\MOh$ \cite{Florescu:2021:C,Florescu:2022:J,Florescu:2022:Cb,Florescu:2022:C, Florescu:2022:Ca} was proposed to tackle fundamental hardware non-idealities \eg \emph{hysteresis} which is present during \adc quantization. This means that the threshold is different when the input signal is increasing or decreasing. Since modulo samples can be interpreted as \emph{quantization noise} relative to the continuous input \cite{Bhandari:2020:C}, it was argued that the hysteresis effect is also present in \madcs \cite{Florescu:2022:Cb}. Including hysteresis in $\MOh$ is beneficial to the model, allowing to tackle non-standard sampling scenarios such as incomplete folds \cite{Florescu:2022:J}, non-pointwise average sampling \cite{Florescu:2022:C}, or event-driven sampling \cite{Florescu:2021:C,Florescu:2022:Ca}.

\bpara{Acquisition.} We define the $\MOh$ as follows.
\begin{definition}[Generalized Modulo]
\label{def:GNM}
	The analog modulo encoder with threshold $\lambda$, hysteresis parameter $h\in\left[0,2\lambda\right)$ and transient parameter $\alpha$, is an operator $\MOh:L^2\rob{\mathbb{R}} \rightarrow L^2\rob{\mathbb{R}}$, $\boldsymbol{\mathsf{H}}=\sqb{\lambda,h,\alpha}$, that generates an analog function $z \rob{t}=\MOh g \rob{t},$ $ t\geq \tau_0$ in response to input $g \in \PW{\Omega}$ defined by
	\begin{equation*}
		z \rob{t}=g\rob{t}-\varepsilon_{\boldsymbol{\mathsf{H}}}\rob{t},
	\end{equation*}
	where 
$${\varepsilon_{\boldsymbol{\mathsf{H}}}\rob{t}}=\sum\limits_{p\in\mathbb{Z}} s_p \varepsilon_0\rob{t-\tau_p},\ t\in\mathbb{R}$$ 
is known as the residual, $$\varepsilon_0\rob{t}=2\lambda_h \sqb{ \rob{1/\alpha} t\cdot \ind_{[0,\alpha)}\rob{t}+ \ind_{[\alpha,\infty)}\rob{t}},$$ $\lambda_h=\lambda-h/2$, $s_p= \mathrm{sign} \rob{g\rob{\tau_p}-g\rob{\tau_{p-1}}}, p\geq1$ and  $\cb{\tau_p}$ is an asynchronous sequence satisfying $\tau_0=0$ and
	\begin{align*}
		\tau_1&=\min\cb{t > \tau_0 \vert \MO{g\rob{t}+\lambda}= 0 }\\		\tau_{p+1}&=\min\cb{t> \tau_p \vert \MO{g\rob{t}-g\rob{\tau_p}+h s_p}=0}.
	\end{align*}
\end{definition}

The $\MOh$ operator was shown to accurately describe the experimentally observed output of the \madc for different values of the design parameter $h$\footnote{To get a sense of the hardware functionality, we refer the reader to the live hardware demonstration here: \href{https://youtu.be/R4rji5jOjD8}{\texttt{https://youtu.be/R4rji5jOjD8}}.}.

\bpara{Reconstruction Strategy.} The typical assumption is that the input belongs to  $\LL{2}\cap \mathcal{B}_{\Omega}$, also known as the Paley-Wiener space $\PW{\Omega}$. We filter the data with $\psi_N$, which leads to 
\begin{equation*}
	\label{eq:psi_y}
	\psi_N \ast y\sqb{k}=\psi_N\ast\gamma\sqb{k} - \psi_N\ast r\sqb{k},
\end{equation*}
where $r[k]=\varepsilon_{\boldsymbol{\mathsf{H}}}(kT)$ and
$\psi_N $ can be the finite difference filter $\Delta^N$ \cite{Florescu:2022:J} or a generalized filter allowing additional noise robustness \cite{Florescu:2022:Cb}. The general idea  is to compute $r\sqb{k}$ and then $\gamma\sqb{k}=y\sqb{k}+r\sqb{k}$. Thus, the only unknowns are the folding times $\tau_p$ and signs $s_p$.
Generally, when the transient duration $\alpha$ is not negligible, the recovery is performed via \emph{thresholding}, exploiting that $\psi_N\ast r\sqb{k}$ consists of pulses centered in each folding time, with shape given by  $\zeta_N\sqb{k}=\sum_{i=-\infty}^k \psi_N\sqb{i}$, the step response of $\psi_N$. Thresholding requires that $\psi_N\ast\gamma\sqb{k}$ is small and that the folding times are separated so that the pulses corresponding to different folds do not overlap. This can be enabled via the hysteresis parameter $h$ allowing a lower bound on the spacing between folds $    \tau_{p+1}-\tau_p \geq \frac{h}{\Omega \norm{g}_\infty}$. 
For $\psi_N=\Delta^N$, the thresholding recovery conditions are \cite{Florescu:2022:J}
\begin{enumerate}[leftmargin = 40pt]
\item $\rob{T\Omega e}^N \norm{g}_\infty + 2^N\eta_\infty\leq\frac{\lambda_h}{2N}$,
\item $\rob{N+1} T\Omega \norm{g}_\infty \leq h^\ast$,\quad $h^\ast=\min\cb{h,2\lambda-h}$.
\end{enumerate}
If $\alpha$ is negligible, then the recovery can also be performed with \usalg, as will be detailed in Section \ref{sect:genmodulo_recovery}.

\subsection{Non-Ideal Modulo Folding \texorpdfstring{$(\widetilde{\mathscr{M}}_\lambda)$}{Mnil}}
In practice, despite its simplicity, $\VO{g}$ satisfies $\VO{g} \not\in 2\lambda\Z$. In that case, any algorithm that assumes $\VO{g} \in 2\lambda\Z$ will fail to recover the residue signal\footnote{In the example of the \usalg, $\BL{g}$ and $\VO{g}\not\in 2\lambda\mathbb{Z}$ can no longer be separated by $\Delta^N$ operation because $\VO{g}\not\in 2\lambda\mathbb{Z} \Rightarrow \Delta^{N} \gamma \not= \MO{\Delta^{N^\star} y}$.}. This required flexible algorithms that are agnostic to $\lambda$ and can handle $\VO{g} \not\in 2\lambda\Z$.

\bpara{Acquisition.} The non-ideal modulo operator $\MONI{\cdot}$ \cite{Bhandari:2021:J} is implicitly defined via $g\rob{t} = \MONI{g\rob{t}}+\mathscr{R}_g\rob{t}$ where,
\begin{equation}
\label{eq:res}
\mathscr{R}_g\rob{t}= \sum\limits_{m \in \mcal{M}} {c_m{\ind_{{\mathcal{D}_m}}}\left( t \right)}, 
\ \ 
\cupdot_m \ind_{{\mathcal{D}_m}} = \R,
\ \ 
c_m\iR.
\end{equation}
$\mcal{M}$ is the set of folding instants, $\mathcal{D}\subseteq\mathbb{R}$ and unlike $\VO{g}$ in \eqref{eq:MDP}, $\mathscr{R}_g$ can take arbitrary real values. The non-ideal residue in \eqref{eq:res} can describe (a) the effect of  reset- or shot-noise disruptions on the sampled signal\cite{Bhandari:2022:C}, (b) imperfections in the acquisition device \cite{Florescu:2022:C} and, (c) deviations in the modulo threshold. 

\bpara{Reconstruction Strategy}. To recover $\BL{g}$ from $y\sqb{k} = \MONI{g\rob{k\ts}}$, the ``Fourier-Prony'' algorithm or \usfp was devised in \cite{Bhandari:2021:J}. The \usfp seeks to recover $r\sqb{k} = \mathscr{R}_g \sqb{k}$ in the Fourier-domain while making no assumptions about $\lambda$. Instead, the spectral properties of $y\sqb{k}$ and the $\BL{g}$ are exploited to achieve a Fourier-domain separation between $\gamma\sqb{k}$ and $r\sqb{k}$%
\footnote{Note that $\BL{g}$ but $ \MONI{g\rob{t}} \notin \mathcal{B}_{\Omega}$. The spectrum of  $\MONI{g}$ can be split into an in-band component $\rob{\BL{g}}$ and an out-of-band component that only contains $ \mathscr{R}_g$. Furthermore, the parametric structure of $\mathscr{R}_g$ with the number of parameters depending on the folding instances, dictates the number of out-of-band accessible values that are required for the recovery of  $r\sqb{k}$.}.
It was shown that using the \usfp algorithm, a $\tau$-periodic, $\Omega$-bandlimited signal folded at most $M$ times can be exactly recovered (up to a constant) from its modulo samples if the sampling period obeys,
\begin{equation}
   \ts \leq \TFD = \frac{\tau}{K}
   \quad 
   \mbox{where}
   \quad 
   K \geq 2\rob{\cil{\frac{\Omega\tau}{2\pi}}+M+1}.
   \label{eq:tfd}
\end{equation}
With $\ts$ above, spectral estimation techniques are harnessed by \usfp to robustly recover $r\sqb{k}$ and hence $g\rob{t}$.  

\section{Towards Computational Demodulation via Modulo Undersampling}
\label{sec:BPwithOL}
In the following, we consider a real-valued bandpass signal $\BP{g_\tbp} \subseteq \mathcal{B}_{\Omega_{\mathsf{U}}}$, where $0<\ol<\ou$. We can use \eqref{eq:MDP} to write $
z_{\tbp}\rob{t} = \MO{g_{\tbp}\rob{t}} = 
g_{\tbp}\rob{t} - \VO{g_{\tbp}}\rob{t}$. The samples of $z_{\tbp}\rob{t}$ at intervals of $\ts = 2\pi / \os$ are
\begin{equation}
    y_{\tbp}\sqb{k} \DE z_{\tbp}\rob{t}|_{t=k\ts} =  \overbrace{g_{\tbp}\rob{t}|_{t=k\ts}}^{ = \gamma_{\tbp}\sqb{k}} - \overbrace{\VO{g_{\tbp}}\rob{t}|_{t=k\ts}}^{ = r_{\tbp}\sqb{k}}.
    \label{eq:yBP}
\end{equation}
Since $g_{\tbp}\rob{t} \in \mathcal{B}_{\Omega_{\mathsf{U}}}$, we can reinterpret a bandpass signal as a bandlimited signal with bandwidth $\ou$ and thus, the recovery methods of the USF apply. A naive reconstruction approach to obtain $g_{\tbp}\rob{t}$ from modulo samples $y_{\tbp}\sqb{k}$ would entail working with the three recovery models in Section \ref{sec:SmpPipe} and the corresponding sampling periods given in \fig{fig:LS}. Since this approach does not exploit $\ftsub{g}{\tbp}{\omega}=0$, $\forall \omega \in \sqb{-\ol, \ol}$, we observe that the corresponding USF sampling periods are exorbitant, and often impractical in real-world scenarios.

\begin{figure}[!t]
    \centering
     \scalebox{0.65}{\input{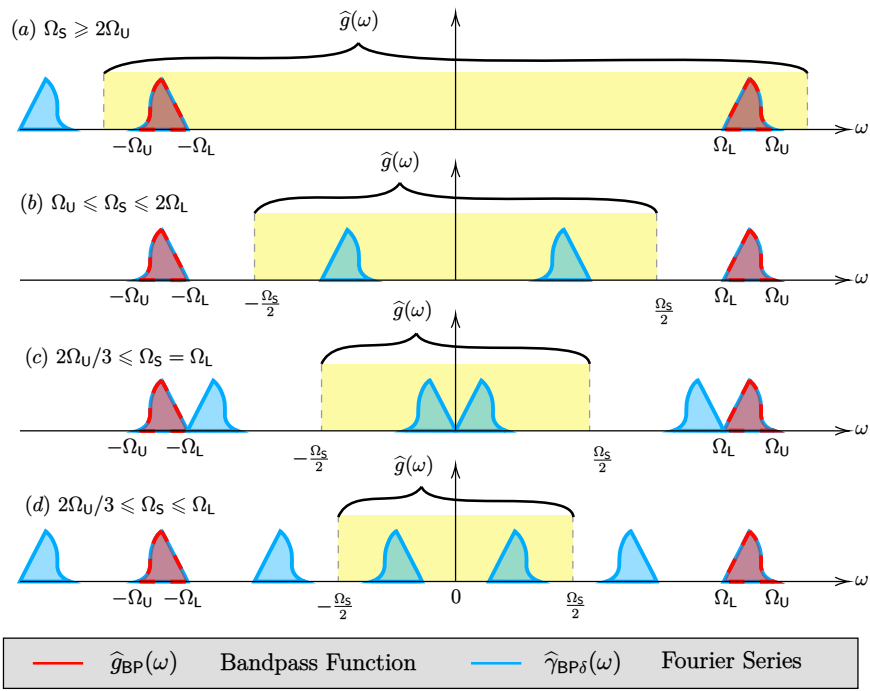}}  

    \caption{The computation of the baseband function $g(t)$. The Fourier transform of the bandpass function (red), and periodicity of the Fourier series of the bandpass samples (blue)  at four sampling frequencies. (a) Sampling above the critical rate of the bandpass signal. (b) Sampling frequency within the wedge of $P=2$. (c) Maximal valid sampling frequency for $P=3$. (d) Sampling frequency within the wedge of $P=3$.}
    \label{fig:BPsamp}
\end{figure}

This paper focuses on a \emph{computational sensing} framework for efficient sampling of folded bandpass signals. Before we discuss the context of modulo samples, we discuss the conventional setup \cite{Brown:1983:J,Vaughan:1991:J,Wahab:2022:J} and revisit the concept of \emph{sampling-based-demodulation}. In this case, the natural interplay between the following two entities is leveraged, 
\begin{itemize}
  \item Fourier-domain bandpass structure.
  \item Spectral shifting by controlling sampling rate.
\end{itemize}
Sampling a signal at period $\ts$ replicates its spectral content on intervals with period $\os$. If $\os$ is smaller than the critical period of $2\ou$, the replicas alias inside $\rob{-\ol, \ol}$, and a corresponding lowpass signal can be extracted. We qualitatively demonstrate this for $4$ exemplary periods in \fig{fig:BPsamp}. When $\os \leq 2\ou$, as in \fig{fig:BPsamp}(b)-(d), the lowpass signal appears within the zero-centered band of $\rob{-\ol,\ol}$. Notice the spectrum mirroring effect in \fig{fig:BPsamp}(b). For recovery to be possible, the aliased replicas must not overlap. 
The conditions for non-overlapping spectral lobes are formalized in the next lemma, also known as the bandpass sampling theorem \cite{Vaughan:1991:J}.

\begin{lemma}
\label{lem:1}
Let $\BP{g_{\tbp}}$ be a continuous-time bandpass signal and let $\gamma_{\tbp}\sqb{k}$ be its uniform samples with $\ts \leq \pi / \blb$ where $\blb \DE \ou -\ol$. Define $\ftsub{\gamma}{\tbp\delta}{\omega} =\ts \sum_{k \in \Z} \gamma_{\tbp}\sqb{k} e^{-\jmath \omega k \ts}$.
\noindent If there exists $\cndp \in \N$ such that,
\begin{equation}
\label{eq:bps_cond}
\frac{\pi\rob{\cndp-1}}{\ol} \leq \ts \leq \frac{\pi\cndp}{\ou},
\end{equation}
then $\ftsub{\gamma}{\mathsf{BP}\delta}{\omega} =\ftsub{g}{\mathsf{BP}}{\omega}, \omega \in \left(-\ou, -\ol\right) \cup \left(\ol, \ou\right)$ and $g_{\mathsf{BP}}\rob{t}$ can be recovered from $\gamma_{\tbp}\sqb{k}$.
\label{lem:BPS}
\end{lemma}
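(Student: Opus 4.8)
The plan is to pass to the Fourier domain and show that, under the stated hypotheses, sampling produces no aliasing \emph{within} the passband, so that the sampled spectrum coincides there with the continuous one. First I would invoke the Poisson summation formula to rewrite the sampled spectrum as a periodization of the continuous-time Fourier transform,
\[
\ftsub{\gamma}{\tbp\delta}{\omega} = \sum_{n \in \Z} \ftsub{g}{\tbp}{\omega - n\os}, \qquad \os = 2\pi/\ts ,
\]
which is the standard aliasing identity for the bandlimited signal $g_{\tbp}$ (the $\ts$ prefactor in the definition of $\ftsub{\gamma}{\tbp\delta}{\omega}$ is precisely what yields the periodization without extra normalization). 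Since $\ftsub{g}{\tbp}{\omega}$ is supported on $\rob{-\ou,-\ol}\cup\rob{\ol,\ou}$, the assertion $\ftsub{\gamma}{\tbp\delta}{\omega}=\ftsub{g}{\tbp}{\omega}$ on this support is equivalent to showing that every shifted replica $\ftsub{g}{\tbp}{\omega-n\os}$ with $n\neq 0$ vanishes on the passband.

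By the conjugate symmetry of the spectrum of the real signal $g_{\tbp}$ (equivalently, the realness of $\gamma_{\tbp}\sqb{k}$, which gives $\ftsub{\gamma}{\tbp\delta}{-\omega}=\overline{\ftsub{\gamma}{\tbp\delta}{\omega}}$), it suffices to treat the positive band $\rob{\ol,\ou}$; the negative band then follows automatically. The support of the $n$-th replica is $\rob{n\os-\ou,\, n\os-\ol}\cup\rob{n\os+\ol,\, n\os+\ou}$, so I would control the two sub-lobes separately. For the co-oriented lobe $\rob{n\os+\ol,\, n\os+\ou}$, non-overlap with $\rob{\ol,\ou}$ amounts to $\vrb{n}\os \geq \blb$; this holds for every $n\neq 0$ because the hypothesis $\ts\leq \pi/\blb$ gives $\os \geq 2\blb$. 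For the reflected lobe $\rob{n\os-\ou,\, n\os-\ol}$, non-overlap requires $n\os \leq 2\ol$ or $n\os \geq 2\ou$, i.e.\ $n\os \notin \rob{2\ol,\,2\ou}$; this is automatic for $n\leq -1$ and is the substantive constraint for $n\geq 1$.

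Here I would bring in the integer $\cndp$. Rewriting \eqref{eq:bps_cond} with $\ts=2\pi/\os$ turns it into the complementary pair $\rob{\cndp-1}\os \leq 2\ol$ and $\cndp\os \geq 2\ou$. A clean case split then finishes the argument: for $1\leq n\leq \cndp-1$ one has $n\os \leq \rob{\cndp-1}\os \leq 2\ol$, while for $n\geq \cndp$ one has $n\os \geq \cndp\os \geq 2\ou$; in both regimes $n\os\notin\rob{2\ol,\,2\ou}$, so the reflected lobe misses $\rob{\ol,\ou}$. Consequently only the $n=0$ term survives on $\rob{\ol,\ou}$, giving $\ftsub{\gamma}{\tbp\delta}{\omega}=\ftsub{g}{\tbp}{\omega}$ there and, by the symmetry noted above, on $\rob{-\ou,-\ol}$. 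Recovery of $g_{\tbp}\rob{t}$ then follows by restricting $\ftsub{\gamma}{\tbp\delta}{\omega}$ to the passband (an ideal bandpass interpolation) and inverting the Fourier transform.

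The main obstacle is bookkeeping rather than analysis: correctly cataloguing which replicas can fold into the passband and verifying that the two hypotheses play complementary roles — $\ts\leq\pi/\blb$ suppressing the co-oriented lobes for all $n\neq 0$, and the existence of $\cndp$ suppressing the reflected lobes through the $n\leq \cndp-1$ versus $n\geq \cndp$ dichotomy. Some care is also needed at the interval endpoints, where the openness of the spectral support renders mere boundary contact harmless, so that the non-strict inequalities in \eqref{eq:bps_cond} are exactly what is required.
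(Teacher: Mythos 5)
Your proof is correct, and it is worth noting that the paper itself contains no proof of Lemma~\ref{lem:BPS}: the lemma is stated as the known bandpass sampling theorem with a citation to \cite{Vaughan:1991:J}, supported only by the qualitative replica picture in \fig{fig:BPsamp}. Your argument is the standard one and formalizes exactly that picture: Poisson summation turns $\ftsub{\gamma}{\tbp\delta}{\omega}$ into the periodization $\sum_{n}\ftsub{g}{\tbp}{\omega-n\os}$, the co-oriented lobes stay clear of $\rob{\ol,\ou}$ because $\os\geq 2\blb$, and the reflected lobes stay clear because \eqref{eq:bps_cond} is equivalent to the pair $\rob{\cndp-1}\os\leq 2\ol$, $\cndp\os\geq 2\ou$, which forces $n\os\notin\rob{2\ol,2\ou}$ for every $n\geq 1$; conjugate symmetry handles the negative band, and your indicator-plus-inverse-FT recovery step is precisely the paper's \eqref{eq:Fshift}. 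Two minor observations: the openness of the spectral support is indeed what makes the non-strict inequalities in \eqref{eq:bps_cond} harmless at boundary contact, as you say; and the standing hypothesis $\ts\leq\pi/\blb$ is actually implied by \eqref{eq:bps_cond} (subtracting the two equivalent inequalities gives $\os\geq 2\blb$), so it plays no independent role in your argument.
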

In Lemma \ref{lem:1}, $\cndp$ is a integer constant such that there are $\cndp-1$ periods within $\rob{-\ol , \ol}$, i.e., $\flr{\tfrac{2\ol}{\os}}=\cndp-1 \in \N$.
For $\os$ and $\cndp$ satisfying Lemma \ref{lem:BPS} and given  $\gamma\sqb{k} \DE g\rob{k\ts}$, we define $\ftsub{\gamma}{\delta}{\omega} \DE \ts \sum_{k \in \Z} \gamma\sqb{k} e^{-\jmath \omega k \ts}$. Then $g_{\tbp}\rob{t}$ can be obtained by computing the inverse FT of,
\begin{align}
\label{eq:Fshift}        
&    \ftsub{g}{\tbp}{\omega}  = \mathbbm{1}_{\mathcal{D}\rob{ \os,\cndp}}\rob{\omega} \ftsub{\gamma}{\delta}{\omega}, \mbox{ where}  \\
&    \mathcal{D}\rob{ \os,\cndp}  = \tfrac{1}{2}{\left[ - {\cndp}\os, -{(\cndp-1)}\os\right]
\cup 
\tfrac{1}{2}{\left[{(\cndp-1)}\os, {\cndp}\os\right]}}. \notag
\end{align}

\bpara{From Spectral Overlap to Bandpass Recovery.} Sampling-based-demodulation is designed so that bandpass content occupies the vacant lowpass regions via aliasing. Adapting this strategy to the \usf is highly non-trivial as $z_{\tbp}\rob{t} \notin \mathcal{B}_{\rob{\ol,\ou}}$. The relation between the bandpass and the residual components is shown in \fig{fig:MDP}, and reveals that for modulo bandpass signals there is no interval on which $\ftsub{z}{\tbp}{\omega}= \ftsub{g}{\tbp}{\omega} - \ftsub{\varepsilon}{g_{\tbp}}{\omega}=0$. Therefore, previous approaches do not apply\cite{Brown:1983:J,Vaughan:1991:J,Wahab:2022:J}. 
Nonetheless, the example in \fig{fig:exp0} implies that spectral overlapping does not necessarily prevent demodulation, albeit in the context of bandpass signals. 
On the flip side, we show that one can demodulate even non-bandpass signals such as $\ftsub{z}{\tbp}{\omega}$ by sampling.
The goal of this paper is to develop mathematical guarantees and frameworks for computational demodulation of modulo signals. By looking at the two components of a modulo signal \eqref{eq:MDP}, the task is to determine the intersection between sampling periods that, 
\begin{enumerate}[leftmargin=30pt,itemsep=0pt]
  \item do not create spectral overlap of $g_{\tbp}\rob{t}$, and,
  \item are sufficient to recover the residual samples.
\end{enumerate}

\subsection{Undersampling with the Ideal Modulo Architecture}
\label{sec:idealMUS}

In what follows, we prove a sampling theorem enabling ideal modulo sampling of bandpass signals with a sampling rate that improves the known result $\ts \leq \TUS = \nicefrac{1}{2\Omega e} $ \eqref{eq:TUS}.  
\begin{theorem}[The Unlimited Bandpass Sampling Theorem]
\label{thm:USBP}
Let $g_{\tbp}\rob{t} \BP{}$ and $z_{\tbp}\rob{t}=\MO{g_{\tbp}\rob{t}}$. Let $y_{\tbp}\sqb{k} = z\rob{k \ts}$, $k \in \Z$ be the modulo samples of $g_{\tbp}\rob{t}$ at period $\ts > 0$.
If there exists $\cndp \in \N$ such that $\ts$ satisfies,
\begin{equation}
\begin{cases}
\frac{\pi\rob{\cndp-1}}{\ol} \leq \ts \leq \frac{2\pi e \rob{\cndp-1}+1}{2 e\ou} & \cndp \in 2\N+1  \\
\frac{2\pi e\cndp-1}{2 e\ol} \leq \ts \leq \frac{\pi \cndp}{\ou}
&\cndp \in 2\N
\end{cases}              
\label{eq:th_USBP}
\end{equation}
where e denotes Euler’s number, then $g_{\tbp}\rob{t}$ can be recovered from $y_{\tbp}\sqb{k}$ up to an integer multiple of $2\lambda$.
\label{theo:USBP}
\end{theorem}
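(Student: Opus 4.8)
The plan is to reduce the modulo bandpass problem to the ordinary Unlimited Sampling machinery behind \eqref{eq:TUS}, but to run that machinery on the \emph{aliased baseband companion} of $g_{\tbp}$ rather than on $g_{\tbp}$ itself. The two branches of \eqref{eq:th_USBP} will then fall out as the intersection of (i) the bandpass sampling window of Lemma~\ref{lem:BPS} and (ii) an Unlimited Sampling condition in which the relevant bandwidth is the effective (aliased) bandwidth, not $\ou$. First I would record the decomposition: from \eqref{eq:yBP} and \eqref{eq:MDP}, $y_{\tbp}\sqb{k} = \gamma_{\tbp}\sqb{k} - r_{\tbp}\sqb{k}$ with $r_{\tbp}\sqb{k} = \VO{g_{\tbp}}\rob{k\ts} \in 2\lambda\Z$. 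Thus recovering $g_{\tbp}$ up to $2\lambda\Z$ is equivalent to recovering the residue sequence $r_{\tbp}$ up to a global additive constant in $2\lambda\Z$, after which $\gamma_{\tbp}\sqb{k} = y_{\tbp}\sqb{k} + r_{\tbp}\sqb{k}$ and the continuous signal follows by the sampling-based-demodulation formula \eqref{eq:Fshift}.

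Next I would check that \eqref{eq:th_USBP} implies the bandpass condition \eqref{eq:bps_cond} in both parities, which holds because $2\pi\e>1$ (for odd $\cndp$ the theorem's upper endpoint lies below $\pi\cndp/\ou$ while the lower endpoints agree; for even $\cndp$ the theorem's lower endpoint exceeds $\pi\rob{\cndp-1}/\ol$ while the upper endpoints agree). Under \eqref{eq:bps_cond} the replicas of $\ftsub{g}{\tbp}{\cdot}$ induced by sampling do not overlap and land inside $\rob{-\os/2,\os/2}$, so there is a real baseband signal $g_{\mathrm{BB}}$, bandlimited to $\sqb{-\Omega_{\mathsf{e}},\Omega_{\mathsf{e}}}$, with $g_{\mathrm{BB}}\rob{k\ts}=\gamma_{\tbp}\sqb{k}$ for every $k$. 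Writing $\os=2\pi/\ts$ and $M=\flr{\cndp/2}$, the aliased band abuts one edge of the principal interval, giving $\Omega_{\mathsf{e}} = \ou - \rob{\cndp-1}\pi/\ts$ when $\cndp$ is odd and $\Omega_{\mathsf{e}} = \cndp\pi/\ts - \ol$ when $\cndp$ is even.

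Because $g_{\mathrm{BB}}$ and $g_{\tbp}$ share identical samples, their finite differences coincide, $\Delta^{N}\gamma_{\tbp}=\Delta^{N}\gamma_{\mathrm{BB}}$, so the Unlimited Sampling estimate $\maxn{\Delta^{N}\gamma}\le\rob{\ts\,\Omega_{\mathsf{e}}\,\e}^{N}\maxn{\gamma}$ applies with the \emph{effective} bandwidth $\Omega_{\mathsf{e}}$ in place of $\ou$. Hence, exactly as in \eqref{eq:TUS}, imposing $\ts\le 1/\rob{2\e\,\Omega_{\mathsf{e}}}$ and choosing $N^\star$ large enough forces $\maxn{\Delta^{N^\star}\gamma_{\tbp}}<\lambda$, whence $\Delta^{N^\star}\gamma_{\tbp}=\MO{\Delta^{N^\star}y_{\tbp}}$, the sequence $\Delta^{N^\star}r_{\tbp}$ is read off, and stable inversion of $\Delta^{N^\star}$ using $r_{\tbp}\in 2\lambda\Z$ recovers $r_{\tbp}$ up to a $2\lambda\Z$ constant. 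Substituting the two expressions for $\Omega_{\mathsf{e}}$ into $\ts\le 1/\rob{2\e\,\Omega_{\mathsf{e}}}$ and clearing the $\ts^{-1}$ term (the inequality linearises, since $\ts\cdot\Omega_{\mathsf{e}}$ is affine in $\ts$) yields precisely the upper bound $\ts\le\rob{2\pi\e\rob{\cndp-1}+1}/\rob{2\e\ou}$ for odd $\cndp$ and the lower bound $\ts\ge\rob{2\pi\e\cndp-1}/\rob{2\e\ol}$ for even $\cndp$; pairing each with the surviving Lemma~\ref{lem:BPS} endpoint reproduces \eqref{eq:th_USBP} exactly. Combining with the demodulation step \eqref{eq:Fshift} then delivers $g_{\tbp}$ up to an integer multiple of $2\lambda$.

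The crux, and the step I expect to demand the most care, is the effective-bandwidth argument: justifying that the decay of the finite differences of the \emph{undersampled} bandpass samples is governed by the aliased bandwidth $\Omega_{\mathsf{e}}$, which may be far below $\ou$, rather than by the true bandwidth $\ou$. This is exactly what breaks the $\ts\le\TUS=1/\rob{2\e\ou}$ barrier and enables sub-Nyquist modulo recovery. The parity split is the book-keeping shadow of whether the aliased band abuts the upper or lower edge of $\rob{-\os/2,\os/2}$, which dictates whether $\Omega_{\mathsf{e}}$ is decreasing or increasing in $\ts$ and hence whether the modulo constraint appears as an upper or a lower bound. Two secondary care-points remain: confirming that $g_{\mathrm{BB}}$ is genuinely real and bandlimited to $\Omega_{\mathsf{e}}$ (the even case involves a spectral conjugation/mirroring, as in \fig{fig:BPsamp}(b)), and checking the edge behaviour of the extraction filter in \eqref{eq:Fshift} so that the residual $2\lambda\Z$ constant is carried through cleanly to the reconstructed $g_{\tbp}$.
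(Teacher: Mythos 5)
Your proposal is correct and takes essentially the same route as the paper's proof: both construct the aliased baseband companion $g$ (the paper's \eqref{eq:LPF}) whose samples coincide with $\gamma_{\tbp}\sqb{k}$, apply the Unlimited Sampling condition with the effective bandwidth $\oup$ of \eqref{eq:USLM} in place of $\ou$, and intersect the result with the bandpass window of Lemma~\ref{lem:BPS} to obtain the parity-split bounds \eqref{eq:th_USBP}. The only cosmetic difference is that the paper establishes $y_{\tbp}\sqb{k}=y\sqb{k}$ explicitly via the intersection argument $\rob{-2\lambda,2\lambda}\cap 2\lambda\Z=\cb{0}$ and then invokes the US theorem through the Fourier-series identity \eqref{eq:FSBP2}, whereas you obtain the same fact from pointwise coincidence of samples and run the finite-difference estimate directly.
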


\begin{proof}
We define the normalized FS with coefficients $y_{\tbp}\sqb{k}$,
\begin{align}
 \label{eq:FSBP}
    \ftsub{y}{\tbp\delta}{\omega} 
     &\DE \ts \sum_{k \in  \Z} y_{\tbp}\sqb{k}e^{-\jmath \omega k \ts}\\
    &\EQc{eq:yBP} \ts \sum_{k \in  \Z} \left[ \gamma_{\tbp} - r_{\tbp} \right]\sqb{k} e^{-\jmath \omega k \ts}
     = \sqb{\hat{\gamma}_{\tbp\delta}-\hat{r}_{\tbp\delta}} \rob{\omega}, \notag
\end{align}
and the lowpass filtered signal $g \rob{t}$ via its spectrum,
\begin{equation}
\label{eq:LPF}
   \ft{g}{\omega}=\mathbbm{1}_{\left[ - \tfrac{\os}{2}, \tfrac{\os}{2} \right]}\rob{\omega} \ftsub{\gamma}{\tbp\delta}{\omega}.
\end{equation}
Using that $\ts \gamma_{\tbp}\sqb{k}$ are the FS coefficients of $\ftsub{\gamma}{\tbp\delta}{\omega}$ together with \eqref{eq:LPF} we write,
\begin{align}
\label{eq:gbpg}
         \gamma_{\tbp}\sqb{k} &= \frac{1}{2\pi} \int_{-\os/2 }^{\os/2} \ftsub{\gamma}{{\tbp}\delta}{\omega} e^{\jmath \omega k \ts}d\omega  \\
    &= \frac{1}{2\pi}  \int_{-\infty}^{\infty} \mathbbm{1}_{\left[ - \tfrac{\os}{2}, \tfrac{\os}{2} \right]}\rob{\omega} \ftsub{\gamma}{\tbp\delta}{\omega} e^{\jmath \omega k \ts}d\omega  \notag \\
    &\equalref{ \eqref{eq:LPF}} \frac{1}{2\pi}  \int_{-\infty}^{\infty} \ft{g}{\omega} e^{\jmath \omega k \ts}d\omega  = g\rob{t}_{|t=k\ts} \DE \gamma \sqb{k}.\notag
\end{align}
We can use \eqref{eq:MDP} to write $g\rob{t} = z\rob{t} + \VO{g}\rob{t}$, where $z\rob{t}=\MO{g\rob{t}}$. Respectively, we write the samples $\gamma\sqb{k}$ as,
\begin{equation}
    \gamma\sqb{k}= y\sqb{k} + r\sqb{k}.
    \label{eq:y}
\end{equation}
The difference between the two discrete modulo sequences is,
\begin{equation}
     y_{\tbp}\sqb{k} -  y\sqb{k} = \underbrace{\gamma_{\tbp}\sqb{k} -\gamma\sqb{k}}_{=0 \text{ from \eqref{eq:gbpg}}} + r\sqb{k} -  r_{\tbp}\sqb{k}.
     \label{eq:grid}
\end{equation}
For the LHS of \eqref{eq:grid}, since $y_{\tbp}\sqb{k} \in [-\lambda,\lambda)$ and $-y_{\tbp}\sqb{k} \in (-\lambda,\lambda]$, we get that $y_{\tbp}\sqb{k} - y\sqb{k} \in (-2\lambda,2\lambda)$. For the RHS, $r\sqb{k} -  r_{\tbp}\sqb{k} \in 2\lambda\Z$. Hence, the intersection set between the two sides of \eqref{eq:grid} must be zero and we get,
\begin{equation}
\label{eq:yBPtoy}
        y_{\tbp}\sqb{k} -  y\sqb{k} =  r\sqb{k} -  r_{\tbp}\sqb{k}  = 0  \Rightarrow  y_{\tbp}\sqb{k} =  y\sqb{k}.
\end{equation}
Substituting \eqref{eq:gbpg} and \eqref{eq:yBPtoy} into \eqref{eq:FSBP} we get,
\begin{equation}
 \label{eq:FSBP2}
    \ftsub{y}{\tbp\delta}{\omega} 
     = \ftsub{y}{\delta}{\omega} = \ftsub{\gamma}{\delta}{\omega} - \ftsub{r}{\delta}{\omega},
 \end{equation}    
where $\ftsub{\gamma}{\delta}{\omega}$ and $\ftsub{r}{\delta}{\omega}$ are the FS of $\gamma[k]$ and $r\sqb{k}$, respectively. From \eqref{eq:FSBP2}, we get that $\ftsub{y}{\delta}{\omega} = \ftsub{\gamma}{\delta}{\omega} - \ftsub{r}{\delta}{\omega}$ which means that under the conditions of the US theorem \cite{Bhandari:2020:Ja} we can get $\gamma\sqb{k}$ (and hence, $g\rob{t}$) from $y_{\tbp}\sqb{k}$. Moreover,
if \eqref{eq:bps_cond} is satisfied then from Lemma \ref{lem:BPS}, $g_{\tbp}\rob{t}$ can be obtained and
\begin{gather}
        \oup = \begin{cases}
    \ou - \frac{\cndp-1}{2}\os \quad &\cndp \in 2\N+1 \\
   \frac{\cndp}{2}\os  -\ol \quad &\cndp \in 2\N
    \end{cases},
\label{eq:USLM}
\end{gather}
where $\oup$ is the bandwidth of $g$. The USF recovery condition for $g\rob{t}$ requires that
$\os\geq 4\pi\ou^g e$. For $\cndp \in 2\N+1$, plugging \eqref{eq:USLM} allows to re-write the USF recovery condition as
\begin{equation}
\begin{aligned}
    \label{eq:USF_Omega_S}
    \os\geq 4\pi\ou^g e=4\pi e \rob{\ou-\frac{P-1}{2}\os}\\
    \Leftrightarrow \os\geq \frac{4\pi e}{1+ 2\pi e \rob{\cndp-1}}\ou.
\end{aligned}
\end{equation}
It can be directly shown that $\frac{4\pi e}{1+ 2\pi e \rob{P-1}}\ou\geq \frac{2\ou}{P}$, which yields the final condition on $\os$,
\begin{equation}
\label{eq:1}
    \frac{4\pi e}{1+ 2\pi e \rob{\cndp-1}}\ou\leq\os\leq \frac{2\ol}{\cndp-1}.
\end{equation}
For $\cndp \in 2\N$, using the same derivation as in \eqref{eq:USF_Omega_S}, we get $\os\geq 4\pi e\rob{ \frac{\cndp}{2}\os  -\ol }$, which yields 
\begin{equation}
\label{eq:2}
    \frac{2\ou}{\cndp}\leq\os\leq \frac{4\pi e}{2\pi e \cndp -1}\ol.
\end{equation}
Using $T_S=\frac{2\pi}{\os}$ in (\ref{eq:1},\ref{eq:2}) yields the desired result; then $g_{\tbp}\rob{t}$ can be recovered from its modulo samples.
\end{proof}
Theorem \ref{theo:USBP} is consistent with \eqref{eq:TUS}. Given that $g_\tbp \in \mathcal{B}_{[\ol,\ou]} \subseteq  \mathcal{B}_{\ou}$, if we sample above the critical rate we have $\cndp=1$ and thus condition \eqref{eq:1} is satisfied for $\ts \leq \TUS$. Nevertheless, for $\cndp > 1$, the bandpass prior is exploited and this results in lower sampling periods. In the bandpass result of \cite{Vaughan:1991:J}, $\cndp$ is bounded by $\cndp \leq \flr{\frac{\ou}{\blb}}$. We can find an equivalent bound for the modulo bandpass case from \eqref{eq:1} and \eqref{eq:2} as,
\begin{equation*}
\begin{cases}
(\cndp-1)2\pi e \blb \leq \ol & \cndp \in 2\N+1 \\
  \ \qquad    2\pi e \cndp \blb \leq \ou& \cndp \in 2\N
\end{cases}
\end{equation*}
which implies that,
\begin{equation}
\begin{cases}
 \cndp \leq \flr{\frac{\ol}{2\pi e \blb}+1} & \cndp \in 2\N+1 \\
 \cndp \leq \flr{\frac{\ou}{2\pi e \blb}} & \cndp \in 2\N
\end{cases}.
\label{eq:Pbound}
\end{equation}
Since the bound in \eqref{eq:Pbound} is tighter for the even case, we get that $\cndp \leq \flr{\frac{\ou}{2\pi e \blb}}$, which is stricter than $\cndp \leq \flr{\frac{\ou}{\blb}}$. The sampling frequencies from Theorem \ref{thm:USBP} that achieve minimal bandwidth $\oup=\blb$ for $g$ are
\begin{equation}
\begin{cases}
\os = \frac{2}{\cndp-1}\ol & \cndp \in 2\N+1 \\
\os =  \frac{2}{\cndp}\ou & \cndp \in 2\N
\end{cases}
\label{eq:basebandrates}
\end{equation}
which produce the \textit{baseband relocation} cases, and demonstrated in \fig{fig:BPsamp}(c) for $\cndp =3$.

\bpara{Sampling-Based Demodulation of Modulo AM Signals.} There are particular cases where the result of Theorem \ref{theo:USBP} can be improved based on an even stronger prior. For example, we can consider the case of double-sideband AM signals\cite{Pursley:2002}, which are bandpass functions whose spectrum is consisted of two adjacent sidebands that are mirrored around a carrier frequency. These signals are among the most common waveforms that are encountered in practical applications. Because of their spectral structure, AM signals can be recovered even when $\os = \blb$ \cite{Brown:1983:J}. In the next corollary, we extend this result for modulo-folded AM signals.
\begin{cor}[Unlimited Sampling of AM Signals]
\label{cor:AMUS}
If the side lobes of $\hat{g}$ are symmetric, \ie,  $\hat{g}\rob{\ol+\omega}=\hat{g}\rob{\ou-\omega},  \forall\omega\in\sqb{0,\blb/2}$, then $g_{\tbp}(t)$ can be recovered via Algorithm \ref{alg:1} if there exists $\cndpAM\in \N$ such that \begin{equation}
        \os = \frac{\ou+\ol}{2\cndpAM} 
        \quad \mbox{\textit{and}} 
        \quad \frac{\ou+\ol}{\ou-\ol} \geq 4\pi e \cndpAM.
    \label{eq:AMrates}
\end{equation} 
\end{cor}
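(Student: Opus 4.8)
The plan is to follow the same three-part logic as Theorem~\ref{theo:USBP} --- relocate, match modulo samples, then unfold --- but to cash in the extra symmetric-sideband prior so that the demodulated baseband signal has bandwidth $\blb/2$ rather than $\blb$, which is exactly what buys the additional factor of two in the rate relative to \eqref{eq:basebandrates}.

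\textbf{Step 1 (spectral relocation via aliasing).} First I would identify the carrier $\omega_c=(\ou+\ol)/2$ and note that the prescribed rate $\os=(\ou+\ol)/(2\cndpAM)$ is precisely $\omega_c/\cndpAM$, so that $\omega_c$ is an integer multiple of $\os$. By Poisson summation the sampled spectrum restricted to a baseband period is $\ftsub{\gamma}{\tbp\delta}{\omega}=\ft{g_{\tbp}}{\omega+\omega_c}+\ft{g_{\tbp}}{\omega-\omega_c}$ for $\omega\in[-\blb/2,\blb/2]$; that is, the positive band $[\ol,\ou]$ and its negative mirror both fold onto the common interval $[-\blb/2,\blb/2]$. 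The two aliased copies overlap, and here the hypothesis $\ft{g}{\ol+\omega}=\ft{g}{\ou-\omega}$ (equivalently, the spectrum is even about $\omega_c$) together with Hermitian symmetry of the real signal guarantees that they combine coherently rather than destructively, yielding a genuine baseband signal $g(t)\in\mathcal{B}_{\blb/2}$ that is in invertible correspondence with $g_{\tbp}(t)$. This is exactly the classical symmetric-sideband relocation of \cite{Brown:1983:J}, specialized to the bands $\rob{\ol,\ou}$.

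\textbf{Step 2 (modulo coincidence).} Next I would transcribe the range argument from the proof of Theorem~\ref{theo:USBP} verbatim. Writing $\gamma\sqb{k}=g(k\ts)=y\sqb{k}+r\sqb{k}$ with $r\sqb{k}\in 2\lambda\Z$, the difference $y_{\tbp}\sqb{k}-y\sqb{k}$ lies in $(-2\lambda,2\lambda)$ since each term lies in $[-\lambda,\lambda)$, whereas $r\sqb{k}-r_{\tbp}\sqb{k}\in 2\lambda\Z$; because $\gamma_{\tbp}\sqb{k}-\gamma\sqb{k}=0$ by Step~1, the only consistent value is zero and hence $y_{\tbp}\sqb{k}=y\sqb{k}$. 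Thus the modulo samples of the bandpass signal coincide with those of the baseband signal $g$.

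\textbf{Step 3 (baseband unfolding and remodulation).} Since $y\sqb{k}=y_{\tbp}\sqb{k}$ are modulo samples of $g\in\mathcal{B}_{\blb/2}$, I would invoke the \usalg\ guarantee used in Theorem~\ref{theo:USBP}, which for a bandwidth-$\blb/2$ signal reads $\os\geq 4\pi e(\blb/2)=2\pi e\blb$. Substituting $\os=(\ou+\ol)/(2\cndpAM)$ rearranges this into $(\ou+\ol)/(\ou-\ol)\geq 4\pi e\,\cndpAM$, which is exactly the second condition in \eqref{eq:AMrates}. Recovering $g(t)$ up to the unavoidable additive $2\lambda\Z$ constant and inverting the Step~1 relocation (remodulation by $\omega_c$) then returns $g_{\tbp}(t)$ to the same ambiguity; Algorithm~\ref{alg:1} packages these operations. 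The main obstacle is Step~1: one must check at the Fourier-series level that the two sidebands fold onto $[-\blb/2,\blb/2]$ additively and that the evenness of $\ft{g}{\cdot}$ about $\omega_c$ renders this overlap invertible, so that $g$ is a bona fide $\blb/2$-bandlimited signal carrying all the information of $g_{\tbp}$. Once this aliasing bookkeeping is settled, Steps~2 and~3 are immediate specializations of Theorem~\ref{theo:USBP}.
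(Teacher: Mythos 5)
Your proposal matches the paper's proof essentially step for step: it uses the same perfect-overlap relocation at $\os=(\ou+\ol)/(2\cndpAM)$, the same observation that the symmetric sidebands collapse to a baseband signal of bandwidth $\oup=\blb/2$, and the same USF condition $\os\geq 4\pi e\,\blb/2=2\pi e\blb$ rearranged into the right-hand inequality of \eqref{eq:AMrates}, with your Step~2 simply making explicit the modulo-sample coincidence that the paper inherits from Theorem~\ref{theo:USBP}. The only cosmetic difference is that the paper separately records the adjacent-lobe non-overlap requirement $\os\geq\blb$ as \eqref{eq:OUOLCond} and then remarks it is subsumed by the final rate condition, whereas you leave that bookkeeping implicit --- harmless, since your final bound implies it.
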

\begin{proof}
We require that the sideband copies lead to pairs of lobes that perfectly overlap, such that a pair is centered around the origin. This is true if the LHS equation in \eqref{eq:AMrates} is satisfied. Furthermore, to ensure that adjacent pairs of lobes do not lead to additional overlaps, which ensures that the information can be recovered, we further assume $\os \geq \blb$. This leads to the following condition on $\ou, \ol$:
\begin{equation}
 \frac{\ou+\ol}{\ou-\ol} \geq 2\cndpAM.
    \label{eq:OUOLCond}
\end{equation}
Lastly, we compute $\ou^g$, in order to derive a bound for USF. The pair of lobes centered in the origin has width $\ou-\ol$, therefore its bandwidth is $\ou^g=\frac{\blb}{2}$. This yields the USF condition  $
    \os \geq 4\pi e \frac{\blb}{2} = 2\pi e \blb$. By using the LHS equation in \eqref{eq:AMrates} it is translated in terms of $\ou, \ol$  that leads to the RHS equation in \eqref{eq:AMrates}.
\end{proof}
We remark that the RHS equation in \eqref{eq:AMrates} is a tighter condition than \eqref{eq:OUOLCond}, and thus is enough for input recovery in the case of AM signals. 
As a particular case, we highlight the implication of Corollary \ref{cor:AMFP} for sinusoids, which can be viewed as AM signals with a constant modulated amplitude, $\ql$ that corresponds to the carrier frequency and $\blb \xrightarrow{} 0$. While \cite{Shtendel:2022:C} demonstrated that these signals allow lower sampling rates than in the general bandlimited case, under the conditions of Corollary \ref{cor:AMFP}, this improvement can be further enhanced.
\begin{algorithm}[!t]
\SetAlgoLined
{\bf Input:} $y_{\tbp}\sqb{k}$, $\lambda$,$\beta_{g_{\tbp}} = \maxn{g_{\tbp}}$, $\ol$, and $\ou$.\\
\KwResult{Bandpass function, $\widetilde g_{\tbp}\rob{t}$. }

\begin{enumerate}[label = $\arabic*)$,leftmargin=30pt,itemsep=1pt]
\item Compute $\oup$ using \eqref{eq:USLM}.
\item Compute $N=\left\lceil \frac{\log \lambda - \log \beta_{g_{\tbp}}}{\log \rob{T_s\oup e}} \right\rceil$.
\item Compute $\rob{\Delta ^N y}\sqb{k} \equalref{\eqref{eq:yBPtoy}} \rob{\Delta ^N y_{\tbp}}\sqb{k}$. 
\item Compute $\rob{\Delta ^N r}\sqb{k} = \rob{\MO{\Delta ^N y} - \rob{\Delta ^N y}}[k]$. \\
Set $s_{(0)}\sqb{k} = \rob{\Delta ^N r}\sqb{k}$. 
\item For $n=0:N-2$ \\
        \begin{enumerate}[label=(\roman*)]
            \item $s_{(n+1)}\sqb{k} = \mathsf{S}\rob{s_{(n)}}\sqb{k} $, \scriptsize $\mathsf{S}: \{s[k]\}_{k \in \Z^+} \rightarrow \sum_{m=1}^{k}s[m]$. \normalsize
            \item $s_{(n+1)}\sqb{k} = 2\lambda\cil{\frac{\flr{s_{(n+1)} / \lambda}}{2}}$ (rounding to $2\lambda\Z$).
            \item Compute $\kappa_{(n)} = \flr{\tfrac{\mathsf{S}^2\rob{\Delta ^n r}[1] - \mathsf{S}^2\rob{\Delta ^n r}[\tfrac{6\beta_{g_{\tbp}}}{\lambda}+1]}{12\beta_{g_{\tbp}}}+\tfrac{1}{2}}$.
            \item $s_{(n+1)}\sqb{k} = s_{(n+1)}\sqb{k}  + 2\kappa_{(n)}$.
        \end{enumerate}
    end
\item $\widetilde \gamma\sqb{k} = \mathsf{S}\rob{s_{(N-1)}}\sqb{k} + y_{\tbp}\sqb{k} + 2m\lambda$, $m \in \Z$. 
\item Compute $\ftsub{\gamma}{\delta}{\omega} = \ts \sum_{k \in \Z} \widetilde{\gamma}\sqb{k} e^{-\jmath \omega k \ts}$.
\item Estimate $\widetilde g_{\tbp}\rob{t}$ from $\ftsub{\gamma}{\delta}{\omega}$ using \eqref{eq:Fshift} and $\cndp$ from \eqref{eq:th_USBP}.
\end{enumerate}
\caption{Bandpass Recovery via Unlimited Sensing }
\label{alg:1}
\end{algorithm}

\subsection{Undersampling with Generalized Modulo Architecture}
\label{sect:genmodulo_recovery}
Here, we assume that $g_{\tbp}\in\PW{\sqb{\ol,\ou}}$ is sampled with $\MOh$ and $\ts>0$. The goal is to recover $g_{\tbp}\rob{k\ts}$ up to an integer multiple of $\lambda_h$. The output samples satisfy
\begin{equation}
\label{eq:gen_mod_decomp_BP}
    y_{\tbp}\sqb{k}=\MOh g_{\tbp}\rob{k\ts}=
    \rob{\gamma_{\tbp}+r_{\tbp}}\sqb{k}
    \EQc{eq:gbpg}
    \rob{\gamma+r_{\tbp}}\sqb{k},
\end{equation}
where $g\rob{k\ts}$ is defined in \eqref{eq:LPF} and the residue $r_{\tbp}\sqb{k}=\sum\nolimits_{p\in\Z} s_p \varepsilon_0\rob{k\ts-\tau_p}$. We assume that the transient is negligible $(\alpha = 0)$, and thus $\varepsilon_0\rob{k\ts-\tau_p}=2\lambda_h \ind_{\left[\tau_p,\infty\right)}\rob{k\ts} \in \cb{0,2\lambda_h}$. Therefore, samples ${\gamma}\sqb{k}$ \eqref{eq:gen_mod_decomp_BP} can be recovered up to $2\lambda_h \mathbb{Z}$ using a variation of \usalg, as explained next. Using the same reasoning as in \usalg, $\forall k \in \mathbb{Z}$ we require that $\vrb{\Delta^N\rob{ \gamma + \eta}\sqb{k}}<\lambda_h$, which is true if $\rob{\ts\Omega e}^N \norm{g}_\infty <\lambda_h$. %
Subsequently, $\widetilde{\gamma}\sqb{k}={\gamma}\sqb{k}+2 M \lambda_h, M\in\Z$, is computed using $\Delta^N \gamma \sqb{k}=\mathscr{M}_{\lambda_h} \rob{\Delta^N y\sqb{k}}$. In the following, we give a theorem guaranteeing recovery from generalized modulo data.
\begin{theorem}[Generalized Modulo Recovery]
\label{thm:GMBP}
Let $g_{\tbp}\in\PW{\sqb{\ol,\ou}}$ and $y_{\tbp}\sqb{k}=\MOh g_{\tbp}\rob{k\ts}$ be the generalized modulo samples of $g_{\tbp}$ with sampling period $\ts>0$. We assume that there exists  $\cndp\in\mathbb{N}$ such that $\ts$ satisfies \eqref{eq:th_USBP}.
Then $g_{\tbp}$ can be recovered from $\Delta^N y_{\tbp}\sqb{k}$ up to an integer multiple of $2\lambda_h$, where $N$ is defined as
\begin{equation}
    N=\left\lceil \frac{\log \lambda_h - \log \beta_g}{\log \rob{\ts\oup e}} \right\rceil,
    \label{eq:N}
\end{equation}
and $\oup$ is defined in \eqref{eq:USLM}.
\end{theorem}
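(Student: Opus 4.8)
The plan is to reduce the generalized-modulo bandpass problem to a lowpass \usalg-type recovery with threshold $\lambda_h$ and demodulated bandwidth $\oup$, reusing the demodulation machinery set up for Theorem \ref{thm:USBP}. The natural starting point is the decomposition \eqref{eq:gen_mod_decomp_BP}, $y_{\tbp}\sqb{k}=\rob{\gamma+r_{\tbp}}\sqb{k}$, in which the lowpass samples $\gamma\sqb{k}=g\rob{k\ts}$ of the demodulated signal \eqref{eq:LPF} coincide with the bandpass samples $\gamma_{\tbp}\sqb{k}$ by \eqref{eq:gbpg}. The structural fact I would isolate first is that, under the negligible-transient assumption $\alpha=0$, each contribution satisfies $\varepsilon_0\rob{k\ts-\tau_p}\in\cb{0,2\lambda_h}$, so that $r_{\tbp}\sqb{k}=\sum_{p} s_p\varepsilon_0\rob{k\ts-\tau_p}\in 2\lambda_h\Z$ for every $k$. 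Thus, even though $r_{\tbp}$ is generated by the folding times of the \emph{bandpass} signal, it lives on the lattice $2\lambda_h\Z$, which is precisely the hypothesis the finite-difference separation needs.

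Next I would apply $\Delta^N$ to \eqref{eq:gen_mod_decomp_BP}. Since $\Delta^N$ sends $2\lambda_h\Z$-valued sequences to $2\lambda_h\Z$-valued sequences, $\Delta^N r_{\tbp}\sqb{k}\in 2\lambda_h\Z$, and the invariance of $\mathscr{M}_{\lambda_h}$ under additive $2\lambda_h\Z$ shifts gives $\mathscr{M}_{\lambda_h}\rob{\Delta^N y_{\tbp}\sqb{k}}=\mathscr{M}_{\lambda_h}\rob{\Delta^N \gamma\sqb{k}}$. Provided $\maxn{\Delta^N\gamma}<\lambda_h$, the right-hand side collapses to $\Delta^N\gamma\sqb{k}$, so the $N$-th difference of the demodulated lowpass samples is read off directly from $\Delta^N y_{\tbp}$.

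It then remains to certify the bound $\maxn{\Delta^N\gamma}<\lambda_h$ and to invert $\Delta^N$. Because $g$ is bandlimited to $\oup$ (the demodulated bandwidth \eqref{eq:USLM}), the Bernstein-type estimate underlying \usalg yields $\maxn{\Delta^N\gamma}\leq\rob{\ts\oup e}^N\maxn{g}\leq\rob{\ts\oup e}^N\beta_g$. The hypothesis \eqref{eq:th_USBP} is exactly what forces the USF oversampling condition $\os\geq 4\pi\oup e$, equivalently $\ts\oup e\leq\tfrac12<1$, so with $N$ chosen as in \eqref{eq:N} one obtains $\rob{\ts\oup e}^N\beta_g<\lambda_h$. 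Stable inversion of $\Delta^N$ by iterated anti-differencing, using the $2\lambda_h\Z$ restriction on the residue together with the bounded growth of $g$ (verbatim as in \usalg), then recovers $\gamma\sqb{k}$ up to an additive constant in $2\lambda_h\Z$. Interpolating the $\oup$-bandlimited $g\rob{t}$ and applying the demodulation map \eqref{eq:Fshift} with the integer $\cndp$ from \eqref{eq:th_USBP} returns $g_{\tbp}\rob{t}$ up to an integer multiple of $2\lambda_h$.

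The main obstacle, and the point where the argument genuinely departs from the ideal case, is that $\MOh$ is \emph{not} a pointwise map: its output depends on the entire continuous trajectory through the asynchronous folding times $\tau_p$, so the clean sample identity $y_{\tbp}\sqb{k}=y\sqb{k}$ of \eqref{eq:yBPtoy} is unavailable here, since $\MOh g$ and $\MOh g_{\tbp}$ need not agree even though $g$ and $g_{\tbp}$ agree at the sampling instants. The way around this is to never form $\MOh g$ at all: one only needs the sampled residue to land on $2\lambda_h\Z$, and this follows from the \emph{shape} of $\varepsilon_0$ at $\alpha=0$ rather than from which signal produced the folds. Once the residue is lattice-valued, the \usalg separation applies with $\lambda\mapsto\lambda_h$ and $\ou\mapsto\oup$, which is precisely why \eqref{eq:N} is the natural choice of differencing order.
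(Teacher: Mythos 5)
Your proof is correct and follows essentially the same route as the paper's: starting from \eqref{eq:gen_mod_decomp_BP}, using the $\alpha=0$ assumption to place $r_{\tbp}\sqb{k}$ (hence $\Delta^N r_{\tbp}\sqb{k}$) on the lattice $2\lambda_h\Z$, extracting $\Delta^N\gamma\sqb{k}=\mathscr{M}_{\lambda_h}\rob{\Delta^N y_{\tbp}\sqb{k}}$ via the bound $\rob{\ts\oup e}^N\beta_g<\lambda_h$ guaranteed by \eqref{eq:N} and \eqref{eq:th_USBP}, recovering $g\rob{t}$ with \usalg, and demodulating to $g_{\tbp}\rob{t}$ as in Theorem \ref{theo:USBP}. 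Your added remark on why the pointwise identity \eqref{eq:yBPtoy} is unavailable for the non-pointwise operator $\MOh$, and why the lattice structure of the sampled residue suffices in its place, is a sound clarification of a point the paper leaves implicit.
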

\begin{proof}
From \eqref{eq:gen_mod_decomp_BP} we get that $\Delta^N y_{\tbp}=\Delta^N \rob{\gamma  + r_{\tbp}}$, where $\gamma\sqb{k}=g\rob{k\ts}$ and $g$ is defined in \eqref{eq:LPF}. It can be directly verified that the value of $N$ in \eqref{eq:N} guarantees $\rob{\ts\Omega e}^N \norm{g}_\infty <\lambda_h$, which was also shown in \cite{Bhandari:2020:Ja} for the case $h=0$. Similarly, given that $r_{\tbp}\sqb{k}\in \cb{2 m \lambda_h \mid m\in \Z}$, it can be directly shown that $\Delta^N r_{\tbp}\sqb{k}\in \cb{2 m \lambda_h \mid m\in \Z}$. Then, using a similar reasoning as for \usalg, we get that $\mathscr{M}_{\lambda_h}\rob{\Delta^N r_{\tbp}\sqb{k}}=0$ and thus $\Delta^N\gamma\sqb{k}=\mathscr{M}_{\lambda_h}\rob{\Delta^N y_{\tbp}\sqb{k}}$. Further on, the \usalg can be used to recover $g\rob{t}$. Even though the recovery of $g\rob{t}$ has no direct link to the bandpass property of $g_{\tbp}\rob{t}$, this is required in order to get back $g_{\tbp}\rob{t}$ from $g\rob{t}$. As shown in Theorem \ref{theo:USBP}, this is possible if \eqref{eq:th_USBP} is satisfied.
\end{proof}

\subsection{Fourier-Domain Approach for Modulo Undersampling}
The recovery procedure from non-ideal modulo samples exploits both the parametric structure of the residual signal \eqref{eq:res}, and the partitioning of the modulo signal in the frequency domain. For bandpass signals, two partitioning intervals can be leveraged for the signal recovery, since $\ftsub{g}{\tbp}{\omega}=0$, $\forall \omega \in \R, \abs{\omega} \notin (\ol,\ou)$. As a first observation, we note that the bandpass prior interval of $\omega \in (-\ol,\ol)$ can be used to recover the residual signal's parameters without the need for demodulation, for any $\os > 2\ou$. Qualitatively, this can be observed in \fig{fig:MDP}. However, for further reduction of the sampling frequency, demodulation will be necessary. 

\bpara{Fourier-Domain Partitioning of Modulo Bandpass Signals.}
\label{subsec:CDM2} 
We use the DFT to develop the modulo bandpass sampling conditions in the discrete frequency domain. The function $g_{\tbp}\rob{t}$ is considered to be $\tau$-periodic. To avoid spectral leakage, we assume that $\ts=\frac{\tau}{K} $, where $K\in\Z$ denotes the number of samples. Moreover, without reducing the generality, we assume that $\ol$ $\ou$ are integer multiples of ${\frac{2\pi}{\tau}}$. Then we define $\ql \in \Z$ and $\qu  \in \Z$ as
\begin{equation}
\label{eq:QLQU}
    \ql \DE {\frac{K \ol}{\os}}, \quad \qu \DE {\frac{K \ou}{\os}},
\end{equation} 
For $\ts > 0$, sampling $y_{\tbp}\rob{t}$ over a period produces $K = \tau/\ts$ discrete samples. In terms of the $K$--length DFT of $y_{\tbp}\sqb{k}$, \eqref{eq:FSBP2} becomes,
\begin{align*}
     \dftsub{y}{\tbp}{n} &=  \ts \sum\limits_{k=0}^{K-1} {y_\tbp \sqb{k}{e^{ - \jmath \frac{2\pi}{K} k n}}} = \ftsub{y}{\tbp \delta}{\omega} \vert_{\omega = \tfrac{2\pi}{\tau}n} \notag \\
     &= 
     \left( {{{\hat \gamma }_\delta } - {{\widehat r}_{\mathsf{BP}\delta }}} \right){\left. {\left( \omega  \right)} \right|_{\omega  = \frac{{2\pi }}{\tau }n}} \notag \\ 
     &= \dft{g}{n} - \dftsub{r}{\tbp}{n}.
     \end{align*}
Moreover, $g\rob{t}$ defined in \eqref{eq:LPF} is $\tau$-periodic as well, given that its frequencies are obtained by shifting $\ftsub{g}{\tbp}{\omega}$ by integer multiples of $\os=\frac{2\pi}{\ts}$.  

Given that we assume $\ol >0$, we get $\ol \geq 2\pi / \tau$ and thus $\ql \geq 1$. 
We know that the values of the DFT of $\gamma[k]=g(k\ts)$ map to the magnitudes of $\widehat{g}(\omega)\vert_{\omega=n\tfrac{2\pi}{\tau}=n\tfrac{\os}{K}}$.
It then follows from \eqref{eq:USLM} and \eqref{eq:QLQU} that the indices of the passband frequencies in $\dft{g}{n}$, the DFT of $g(t)$, are contained in $$\cb{-\qup,\qup+1,\dots,-\qlp}\cup\cb{\qlp,\qlp+1,\dots,\qup},$$ where
\begin{align}
\label{eq:FPLM}
&    \qup = \begin{cases}
       \qu - \frac{\cndp-1}{2}K  & \qquad\cndp \in 2\N+1 \\
        \frac{\cndp}{2}K -\ql  &\qquad \cndp \in 2\N
    \end{cases} \mbox{ and},  \\
&    \qlp = \qup - \qlb, \notag
\end{align}
where $\qlb \DE \qu - \ql$  and $\cndp-1=\frac{2\ol}{\os} \in \N$ represents the number of periods $\os$ within $\rob{-\ol , \ol}$.
We have that $g(k\ts)=g_{\tbp}(k\ts)$ via \eqref{eq:gbpg}.
Then, using the periodicity of the DFT $\dftsub{y}{\tbp}{K-q}=\dftsub{y}{\tbp}{q}, \forall q\in\cb{0,1,\dots,K}$, the following holds 
\begin{equation*}
        \dftsub{y}{\tbp}{n} = \begin{cases}
           \dft{g}{n} - \dftsub{r}{\tbp}{n}, \quad &n \in \eset{\qlp}{\qup}{K} \\
           -\dftsub{r}{\tbp}{n}, \quad &n \in \mathbb{I}_{K} \setminus \eset{\qlp}{\qup}{K}
       \end{cases}  
\end{equation*}
where the set $\eset{\qlp}{\qup}{K}$ is given by, $$\eset{\qlp}{\qup}{K} = \left[ {\qlp,\qup} \right] \cup \left[ { K - \qup, K - \qlp} \right].$$
The partition is schematically illustrated in \fig{fig:FDpartition}. As can be seen, the two sets on which the residual and the lowpass signals can be separated are,
\begin{equation}
\mathbb{F}_{\tbp}=\begin{cases} 
\sqb{\qup+1, K - \qup -1} & \mbox{(Outer Set)} \\
\sqb{0, \qlp-1} \cup \sqb{K - \qlp -1, K-1} & \mbox{(Inner Set)}
\end{cases}. 
\label{eq:inner_outer_set}
\end{equation}
A sampling density criterion that enables recovery based on the two sets is given in the next theorem.

 \begin{figure}[!t]
\centering
\includegraphics[width =0.65\textwidth, angle=0]{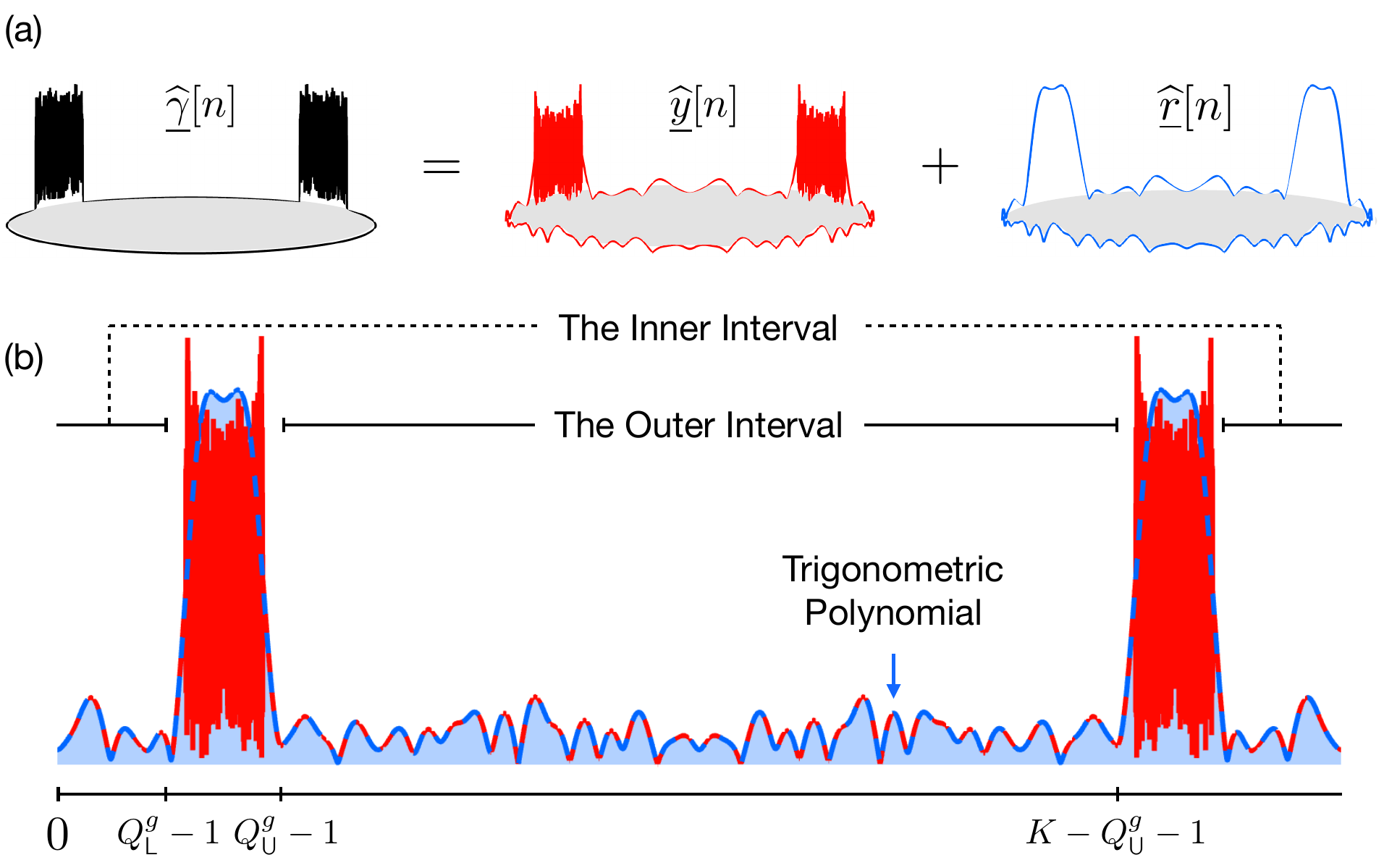}
\caption[width =1\textwidth]{The partitioning of a bandpass signal into its modulo (red) and residual (blue) components in the discrete Fourier-domain. (a) Decomposition of the bandpass spectra into modulo and residual components. (b) The inner and outer intervals as defined in \eqref{eq:inner_outer_set} on a single DFT period.}
\label{fig:FDpartition}
\end{figure}

\begin{theorem}[Bandpass Fourier-Domain Recovery]
\label{thm:FPBP}
Let $ \BP{g_{\tbp}}$ be a $\tau$-periodic function. Suppose that we are given $K$ modulo samples of $y_{\tbp}\sqb{k} = \MONI{g_{\tbp}\rob{k\ts}}$, at sampling period $\ts=\frac{\tau}{K} >0$, folded at most $M$ times. If there exists $\cndp \in \N$ such that $\ts$ satisfies,
\begin{align}
&\begin{cases}
    \frac{\rob{\cndp-1}\tau}{2\ql} \leq \ts \leq \frac{\cndp \tau}{2\rob{\qu + M + 1}}, & \cndp \in 2\N+1 \\
    \frac{(\cndp-1) \tau}{2\rob{\ql - M - 1}} \leq \ts \leq \frac{\cndp\tau}{2\qu}, & \cndp \in 2\N
\end{cases} \notag \\
&\mbox{\textit{or,}} \notag \\
&\begin{cases}
    \frac{\rob{\cndp-1}\tau}{2\rob{\ql - M-1}} \leq \ts \leq \frac{\cndp \tau}{2\qu}, & \cndp \in 2\N+1 \\
    \frac{\rob{\cndp-1}\tau}{2\ql}  \leq \ts \leq \frac{\cndp \tau}{2\rob{\qu + M + 1}}, & \cndp \in 2\N
\end{cases},
\label{eq:outer}
\end{align}
and $\ql > M+1$ for $\cndp \in 2\N$ or $\cndp \in (2\N+1) \setminus \{1\}$ in \eqref{eq:outer}. 
Then, $g_{\tbp}\rob{t}$ can be recovered from $y_{\tbp}\sqb{k}$ up to $2\lambda\mathbb{Z}$.

\label{theo:FDBP}
\end{theorem}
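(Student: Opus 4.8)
The plan is to reuse the two-component Fourier-domain decomposition already assembled before the statement and then graft the \usfp spectral-estimation machinery onto the residual-only frequency bins. Concretely, I would start from the DFT identity $\dftsub{y}{\tbp}{n} = \dft{g}{n} - \dftsub{r}{\tbp}{n}$, where $g$ is the demodulated (baseband-relocated) lowpass signal of \eqref{eq:LPF} and $\gamma[k] = g(k\ts) = g_{\tbp}(k\ts)$ by \eqref{eq:gbpg}. The first task is to certify that, under the bandpass part of \eqref{eq:outer} --- which is exactly the Lemma~\ref{lem:BPS}/Theorem~\ref{theo:USBP} non-overlap window $\tfrac{(\cndp-1)\tau}{2\ql}\leq\ts\leq\tfrac{\cndp\tau}{2\qu}$ once one substitutes $\ql=\tau\ol/(2\pi)$ and $\qu=\tau\ou/(2\pi)$ from \eqref{eq:QLQU} --- the replicas of $\ftsub{g}{\tbp}{\omega}$ do not alias onto one another, so $\dft{g}{n}$ is supported exactly on the relocated passband index set $\{\qlp,\dots,\qup\}\cup\{-\qup,\dots,-\qlp\}$ of \eqref{eq:FPLM}. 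On the complementary inner and outer index sets of \eqref{eq:inner_outer_set} one then has $\dftsub{y}{\tbp}{n} = -\dftsub{r}{\tbp}{n}$, i.e. clean access to the residual's DFT coefficients.

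Second, I would invoke the parametric structure of the non-ideal residual \eqref{eq:res}: with at most $M$ folds, $\mathscr{R}_{g_{\tbp}}$ is $\tau$-periodic and piecewise constant with finitely many jumps, so $\dftsub{r}{\tbp}{n}$ is, up to a $1/n$ weight, a sum of a bounded number of complex exponentials in $n$. This is precisely the structure that the \usfp/Prony estimator underlying \eqref{eq:tfd} resolves, and it consumes only the reserve of residual-only bins quantified by the $M+1$ offsets appearing in \eqref{eq:tfd}. Hence recovering the full $r_{\tbp}[k]$ reduces to guaranteeing that either the outer set $[\qup+1,K-\qup-1]$ or the inner set $[0,\qlp-1]\cup[K-\qlp-1,K-1]$ contains enough consecutive indices, the count being tied to $M+1$ on the active side (the conjugate side being supplied by realness of $g_{\tbp}$).

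Third --- and this is where the arithmetic of \eqref{eq:outer} earns its keep --- I would show that the two ``or'' blocks encode exactly these two alternatives. Starting from the bare bandpass window, replacing $\qu$ by $\qu+M+1$ (odd $\cndp$, first block) shrinks the upper bound just enough to open up the required residual bins in the \emph{outer} set, whereas replacing $\ql$ by $\ql-M-1$ (even $\cndp$, first block) reserves that room in the \emph{inner} set; the second block swaps these roles, and the parity switch tracks the spectrum-mirroring already visible in the $\cndp\in 2\N$ versus $\cndp\in 2\N+1$ cases of \eqref{eq:FPLM}. The side condition $\ql>M+1$ is exactly what keeps the inner set nonempty when it is the one carrying the reserved bins. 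I would then verify the cardinalities of $\mathbb{F}_\tbp$ against $M+1$ directly from \eqref{eq:FPLM}--\eqref{eq:inner_outer_set} in each of the four (block $\times$ parity) sub-cases.

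Finally, once $r_{\tbp}[k]$ is estimated, I would set $\gamma[k]=y_{\tbp}[k]+r_{\tbp}[k]$, which is pinned down only up to its $n=0$ DC term --- a global additive constant that modulo-wraps to $2\lambda\Z$, the asserted ambiguity --- then form $\dft{g}{n}$ and demodulate via \eqref{eq:Fshift} with the same $\cndp$ to return $g_{\tbp}(t)$. The main obstacle I anticipate is precisely the third step: the bandpass non-overlap (Lemma~\ref{lem:BPS}) and the Prony recoverability (via \eqref{eq:tfd}) are each available off the shelf, but stitching them so that a \emph{single} sampling window simultaneously avoids aliasing of $g_{\tbp}$ and leaves enough residual-only bins in the designated inner or outer set --- correctly across both parities and both set choices --- is the delicate index bookkeeping that the conditions in \eqref{eq:outer} are engineered to satisfy.
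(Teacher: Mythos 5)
Your overall route coincides with the paper's: exploit the partition $\dftsub{y}{\tbp}{n} = \dft{g}{n} - \dftsub{r}{\tbp}{n}$, reserve enough residual-only DFT bins in the inner or outer set of \eqref{eq:inner_outer_set} for Prony-type estimation of the ($2M$-parametric) first difference $\bar{r}$, intersect that bin-count requirement with the non-overlap (bandpass) window, and finish by forming $\gamma\sqb{k} = y_{\tbp}\sqb{k} + r\sqb{k}$ and demodulating via \eqref{eq:Fshift}. However, your organizing claim for the step you yourself call the crux --- that $\qu + M + 1$ always flags the \emph{outer} set and $\ql - M - 1$ always flags the \emph{inner} set, so the first block of \eqref{eq:outer} is (odd: outer, even: inner) and the second block swaps --- is wrong for even $\cndp$. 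The spectral mirroring you mention does more than relabel indices: it swaps which of $\ql,\qu$ controls which set's cardinality. From \eqref{eq:FPLM}, for odd $\cndp$ the outer-set size is $K - 2\qup - 1 = \cndp K - 2\qu - 1$ (governed by $\qu$) and $\qlp = \ql - \tfrac{\cndp-1}{2}K$ (inner governed by $\ql$); but for even $\cndp$ the outer-set size is $2\ql - (\cndp-1)K - 1$ (governed by $\ql$) and $\qlp = \tfrac{\cndp}{2}K - \qu$ (inner governed by $\qu$). Consequently, the correct correspondence --- and the one the paper derives --- is that the \emph{entire first block} of \eqref{eq:outer}, both parities, encodes the outer-set requirement $K - 2\qup - 1 \geq 2M+1$, while the \emph{entire second block}, both parities, encodes the inner-set requirement $\qlp \geq M+1$.

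To see your attribution fail concretely, take the first block with $\cndp$ even: its upper bound is $\ts \leq \cndp\tau/(2\qu)$, yet inner-set recovery for even $\cndp$ needs $\tfrac{\cndp}{2}K - \qu \geq M+1$, i.e.\ $\ts \leq \cndp\tau/\rob{2(\qu + M + 1)}$, which is strictly stronger; so that window cannot be reserving inner-set bins. What it does reserve is outer-set bins, through its lower bound $\ts \geq (\cndp-1)\tau/\rob{2(\ql - M - 1)}$, since for even $\cndp$ the outer-set size is $2\ql - (\cndp-1)K - 1$. The fourth-step verification you promise (checking the four block-by-parity cardinalities against \eqref{eq:FPLM}--\eqref{eq:inner_outer_set}) would have exposed this, and repairing it mechanically reproduces the paper's proof; but as written, two of the four sub-cases in your key bookkeeping step are mis-assigned. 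A related smaller slip: the side condition $\ql > M+1$ is not only ``what keeps the inner set nonempty'' --- for even $\cndp$ it is precisely what makes the first block's lower bound (an \emph{outer}-set condition) positive and meaningful.
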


\begin{proof}
We start from \eqref{eq:bps_cond} and multiply by $K$ to get,
\begin{equation}
\label{eq:DBPS0}
    \rob{\frac{\cndp-1}{2}}K \leq \frac{K\ol}{\os} \leq \frac{K\ou}{\os} \leq \rob{\frac{\cndp}{2}}K.
\end{equation}
Re-arranging \eqref{eq:DBPS0} with respect to the number of samples $K$,
\begin{equation}
\label{eq:DBPS1}
    {\frac{2}{\cndp}}\frac{K\ou}{\os} \leq K \leq {\frac{2}{\cndp-1}}\frac{K\ol}{\os}.
\end{equation}
Then, using \eqref{eq:QLQU}, we get an equivalent of the bandpass sampling condition \eqref{eq:bps_cond} in the discrete frequency domain
\begin{equation}
\label{eq:DBPS}
    {\frac{2 \qu}{\cndp}} \leq K \leq {\frac{2 \ql}{\cndp-1}}.
\end{equation}

Next, to recover the bandpass samples we note that for a modulo sequence with $M$ folds, the first order difference of the residual, $\bar{r}\sqb{k}$, is a $2M$-parametric signal. Its recovery requires $2M+1$ samples in the outer set \eqref{eq:inner_outer_set}, for which $\dftsub{y}{\tbp}{n} = -\dftsub{r}{\tbp}{n}$. This is guaranteed by $K -2\qup-1 \geq 2M+1$.
By substituting with $\qup$ from \eqref{eq:FPLM} and isolating $K$ we get,
\begin{equation*}
\begin{cases} 
    K \geq \frac{2\rob{\qu + M + 1}}{\cndp} & \cndp \in 2\N+1 \notag \\
    K \leq \frac{2\rob{\ql-M-1}}{\cndp-1} & \cndp \in 2\N
\end{cases}.    
\end{equation*}
For every $M\geq0, \cndp >0$, we get that, 
$$\frac{2\rob{\qu + M + 1}}{\cndp} \geq \frac{2\qu}{\cndp} \mbox{ and } \frac{2\rob{\ql-M-1}}{\cndp-1} \leq \frac{2\ql}{\cndp-1}.$$ Integration with the bound in \eqref{eq:DBPS} and substitution $K = \tau / \ts$ yields the first condition in \eqref{eq:outer}.
For the set  $$ \sqb{0, \qlp-1} \cup \sqb{K - \qlp -2, K-2}$$ we require that $2\rob{\qlp-1} \geq 2M$. We use $\qlp$ from \eqref{eq:FPLM} and isolate $K$ to get,
\begin{equation}
\begin{cases} 
    K \leq \frac{2\rob{\ql-M-1}}{\cndp-1} & \cndp \in 2\N+1 \\
    K \geq \frac{2\rob{\qu + M + 1}}{\cndp} & \cndp \in 2\N
\end{cases}.    
\label{eq:inner0}
\end{equation}
As in the outer set case, we integrate \eqref{eq:inner0} with the bound in  \eqref{eq:DBPS}, and replace $K = \tau / \ts$ to get the sampling condition in \eqref{eq:outer}. For $\ts$ that satisfies \eqref{eq:outer} for some $\cndp \in \N$, $g_{\tbp}\rob{t}$ can be recovered from its modulo samples by Algorithm \ref{alg:2}.
\end{proof}
\noindent Remarks on Theorem~\ref{theo:FDBP}:
\begin{enumerate}[leftmargin =45pt,label=\ding{224}]
    \item For $\qu =\qlb$ (bandlimited signal case), we return to the expected result that $2(M+1)$ modulo samples beyond the maximal non-zero frequency of the signal are needed.
    \item For $\cndp=1$, reconstruction from the inner set samples requires $\ql \geq M+1$, regardless of the sampling period.
    \item As $\cndp$ gets larger, more replicas of the bandwidth occupy the outer set (even $\cndp$ case) or the inner set (odd $\cndp$ case). Consequentially, fewer folds can be recovered.
    \item For a given $\cndp$, the outer set (even $\cndp$ case) or the inner set (odd $\cndp$ case) shrinks as $K$ increases. Consequentially, larger $M$ would require lowering the sampling period for recovery based on a specific interval. This implies that in modulo undersampling, the sampling period should be optimized based on the recovery strategy (the exploited interval, and the parity of $\cndp$). 
\end{enumerate}
\begin{algorithm}[t!]
\SetAlgoLined
{\bf Input:} $\{y_{\tbp}\sqb{k}\}_{k=0}^{K-1}$, $\tau,\ol, \ou$, and $M = |\mcal{M}|$ \eqref{eq:res}\\ 
\KwResult{Bandpass function, $\widetilde g_{\tbp}\rob{t}$. }

\begin{enumerate}[label = $\arabic*)$,leftmargin=30pt,itemsep=1pt]
\item Compute $\dftsub{\bar{y}}{\tbp}{n}$, the DFT of $\bar y_{\tbp}\sqb{k}$.
\item Compute $\qup$ or $\qlp$ using \eqref{eq:FPLM}.
\item Define $ \dft{\bar{r}}{n} \DE -\dftsub{\bar{y}}{\tbp}{n} \equalref{\eqref{eq:yBPtoy}} -\dft{\bar{y}}{n}$, where $n \in \mathbb{F}_{\tbp}$ \eqref{eq:inner_outer_set}. \\
\item Use spectral estimation to estimate $\widetilde {\bar r}\sqb{k}$ \cite{Bhandari:2021:J}. 
\item Estimate $\widetilde r\sqb{k}=\sum_{m=1}^k \widetilde {\bar r}\sqb{m}$ (up to an unknown constant), and compute $\widetilde \gamma\sqb{k}=\widetilde {r}\sqb{k} + y_{\tbp}\sqb{k}$.
\item Compute $\ftsub{\gamma}{\delta}{\omega} = \ts \sum_{k \in \Z} \gamma\sqb{k} e^{-\jmath \omega k \ts}$.
\item Estimate $\widetilde g_{\tbp}\rob{t}$ from $\ftsub{\gamma}{\delta}{\omega}$ using \eqref{eq:Fshift} and $\cndp$ from \eqref{eq:outer}.
\end{enumerate}
\caption{Bandpass Recovery in the Fourier-Domain }
\label{alg:2}
\end{algorithm}
Similarly to Corollary \ref{cor:AMUS}, the prevalence of AM signals and their spectral structure motivates the proof of the next corollary for non-ideal modulo bandpass sampling. 
\begin{cor}[Fourier-Domain Recovery for AM Signals]
\label{cor:AMFP}
If $g_{\tbp}(t) \BP{}$ is a $\tau$-periodic function such that the side lobes of $\hat{g}$ are symmetric, i.e.,  $\hat{g}\rob{\ol+\omega}=\hat{g}\rob{\ou-\omega},  \forall\omega\in\sqb{0,\blb/2}$, then $g_{\tbp}(t)$ can be recovered from $K$ modulo samples folded at most $M$ times via Algorithm \ref{alg:2} if there exists $\cndpAM\in \N$ such that $\os =\frac{2\pi}{K \ts} = \frac{\ou+\ol}{2\cndpAM}$  and
\begin{equation}
    \blb \leq \os \leq \frac{4\pi\rob{\ql-M-1}}{\tau(2\cndpAM-1)}. 
    \label{eq:AMK}
\end{equation}
\end{cor}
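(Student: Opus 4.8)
The plan is to transport the reasoning of Corollary~\ref{cor:AMUS} into the discrete Fourier-domain partition of Theorem~\ref{theo:FDBP}, using the symmetric-sideband prior to collapse the two passband lobes of $\hat g$ onto a single band at DC. First I would record that the AM rate $\os=\frac{\ou+\ol}{2\cndpAM}$ places the carrier at $\frac{\ou+\ol}{2}=\cndpAM\os$, i.e. at an integer multiple of $\os$; in the index notation of \eqref{eq:QLQU} this reads $\ql+\qu=2\cndpAM K$, so the upper and lower sidebands alias coherently (by $\hat g(\ol+\omega)=\hat g(\ou-\omega)$) onto the zero-centred band. Consequently the demodulated baseband $g\rob{t}$ has one-sided bandwidth $\blb/2$, whose highest DFT index is $\qup=\tfrac{1}{2}(\qu-\ql)=\qlb/2$, in agreement with the even branch $\cndp=2\cndpAM$ of \eqref{eq:FPLM}; correspondingly $\qlp=\qup-\qlb=-\qlb/2<0$, the discrete signature that the inner set of \eqref{eq:inner_outer_set} has degenerated and that residual recovery must be carried out on the outer set alone.

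The two inequalities of \eqref{eq:AMK} then arise as the non-overlap and the recovery conditions, respectively. The DC band has full width $\blb$, so its $\os$-spaced replicas remain disjoint exactly when $\os\geq\blb$; this is the discrete analogue of \eqref{eq:OUOLCond} and is the lower bound of \eqref{eq:AMK}. For the upper bound I would invoke the estimate used in Theorem~\ref{theo:FDBP}: the first difference $\bar{r}_{\tbp}\sqb{k}$ of an $M$-fold residual is $2M$-parametric, so spectral estimation requires $K-2\qup-1\geq 2M+1$ indices in the outer set on which $\dftsub{y}{\tbp}{n}=-\dftsub{r}{\tbp}{n}$. Substituting $\qup=\tfrac{1}{2}(\qu-\ql)$ together with $\qu=2\cndpAM K-\ql$ and isolating $K$ gives $K\leq\frac{2(\ql-M-1)}{2\cndpAM-1}$; with $K=\tau/\ts$ and $\os=2\pi/\ts$ this is precisely $\os\leq\frac{4\pi(\ql-M-1)}{\tau(2\cndpAM-1)}$. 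Positivity of the bound forces $\ql>M+1$, the implicit nondegeneracy requirement.

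To finish, under \eqref{eq:AMK} Algorithm~\ref{alg:2} applies verbatim: from the $\geq 2M+1$ outer-set samples it recovers $\bar{r}_{\tbp}$ by spectral estimation, hence $r_{\tbp}$ and $\gamma\sqb{k}=y_{\tbp}\sqb{k}+r_{\tbp}\sqb{k}$, and its reconstruction step (the demodulation \eqref{eq:Fshift} with $\cndp=2\cndpAM$) returns $g_{\tbp}\rob{t}$ up to the $2\lambda\Z$ ambiguity, the sideband symmetry guaranteeing that the coherently folded DC band determines $g_{\tbp}\rob{t}$ uniquely, as in Corollary~\ref{cor:AMUS} and \cite{Brown:1983:J}.

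I expect the crux to be the ``un-overlapping'' bookkeeping: one must show that the AM rate genuinely forces the even branch with $\qup=\qlb/2$ and $\qlp<0$ (so that the outer set is the \emph{only} usable partition, and that the standard bandpass separation \eqref{eq:DBPS} is deliberately bypassed here), and, independently, that symmetric sidebands folded onto a single DC band still determine $g_{\tbp}\rob{t}$ uniquely. The remaining manipulations—translating $K-2\qup-1\geq 2M+1$ into the stated bound on $\os$ via $\ql+\qu=2\cndpAM K$—are routine once this structural picture is in place.
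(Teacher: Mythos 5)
Your proof is correct and follows essentially the same route as the paper's: coherent lobe overlap forced by the AM rate $\os=\rob{\ou+\ol}/\rob{2\cndpAM}$, the lower bound $\blb\leq\os$ for non-overlap of adjacent replicas, and the outer-set counting condition $K-2\qup-2\geq 2M$ with $\qup$ taken from \eqref{eq:FPLM} at $\cndp=2\cndpAM$, then substituting $K=\os\tau/(2\pi)$ to obtain the stated bound. Your additional bookkeeping (the identity $\ql+\qu=2\cndpAM K$, the verification $\qup=\qlb/2$, and the observation $\qlp=-\qlb/2<0$ showing the inner set degenerates) simply makes explicit what the paper asserts without justification when it states that ``only the outer set can be used.''
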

\begin{proof}
Just as in Corollary \ref{cor:AMUS}, $\os= \rob{\ou+\ol} / \rob{2\cndpAM}$ guarantees perfect overlapping of the spectral lobes. For recovery to work, we further require that $\blb \leq \os$. Here, only the outer set $\sqb{\qup+1, K - \qup -2}$ can be used and hence, we require that $\rob{K -2\qup-2} \geq 2M$. Then, the RHS of \eqref{eq:AMK} follows by substituting $\qup$ \eqref{eq:FPLM} (with $\cndp=2\cndpAM$) and $K=\rob{\os \tau}/(2\pi)$.
\end{proof}
\section{Experiments}
\begin{figure}[!t]
    \centering
    \includegraphics[width=0.75\textwidth]{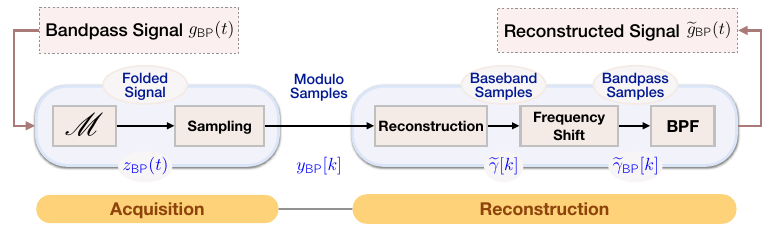}
    \caption{Modulo bandpass acquisition and reconstruction pipeline.}
    \label{fig:pipeline}
\end{figure}

This section provides experiments that validate the theoretical results and demonstrate the potential of our modulo undersampling technique based on the sampling periods in Section \ref{sec:BPwithOL}. We demonstrate computational demodulation and recovery of a bandpass signal from its modulo samples.
\subsection{Numerical Simulations}
As the most practical use case, we consider the case of baseband relocation. 
In the experiments, the bandwidth of the positive Fourier-domain coefficients is $\ou - \ol = \pi$, and their complex values are generated randomly so $\ftsub{g}{\tbp}{w} = 100\texttt{U}_0+ \jmath 120\texttt{U}_1$, $\forall \omega \in [\ol, \ou]$, where $(\texttt{U}_0,\texttt{U}_1)\in (0,1)\times(0,1)$ are sampled from a uniform distribution. The negative Fourier-domain coefficients are the complex conjugate of the positive Fourier-domain coefficients. The bandpass signal is normalized so $\maxn{g_{\tbp}}=1$. The experimental pipeline follows the block diagram in \fig{fig:pipeline}. The modulo of the bandpass signal, $z_\tbp\rob{t}$ is sampled at period $\ts$ for a time interval $\tau$ to produce a sequence of $K=\tau/T$ modulo samples $y_{\tbp}\sqb{k}$, and the samples that are recovered by Algorithm \ref{alg:1} and Algorithm \ref{alg:2} are denoted by  $\widetilde{\gamma}_{\tbp}^{\mathsf{US}}\sqb{k}$ and $\widetilde{\gamma}_{\tbp}^{\mathsf{FP}}\sqb{k}$, respectively. Both algorithms recover the signal up to an additive constant of $2\lambda\Z$. This factor is estimated from the ground-truth samples $\gamma_{\tbp}\sqb{k}$, and the MSE between the signals is measured and denoted by $\mse{\gamma_{\tbp}}{ \widetilde{\gamma}_{\tbp}^{\mathsf{US}}}$ and  $\mse{\gamma_{\tbp}}{ \widetilde{\gamma}_{\tbp}^{\mathsf{FP}}}$.

\bpara{---$\bullet$ Time-Domain Recovery.} The evaluation on this experiment is based on $1000$ random realizations of a bandpass signal $g_{\tbp}\in \mathcal{B}_{\left(50\pi, 51\pi \right)}$ with $\lambda\sim\texttt{U}\rob{0.05, 0.1}$. According to \eqref{eq:TUS}, the sampling period should satisfy $\ts \leq \TUS^{\tbp}= 1  /\rob{102\pi e} \approx 1.1\times10^{-3}$. However, according to Theorem \ref{theo:USBP} with $\cndp=5$ recovery is still possible for $0.080 \leq \ts \leq 0.085$. 
We set $\ts = 0.080$ to achieve baseband relocation in \eqref{eq:basebandrates}. The maximal reconstruction error achieved by Algorithm \ref{alg:1} is $\mse{\gamma_{\tbp}}{ \widetilde{\gamma}_{\tbp}^{\mathsf{US}}}= 8.54\times 10^{-32}$. This verifies the validity of theorem \ref{theo:USBP}, and shows a factor of $8\pi e$ reduction in the required sampling period for bandpass signals recovery. The modulo samples of $g_{\tbp}\rob{t}$ with $\lambda= 0.07$ and the reconstruction using Algorithm \ref{alg:1} with $N=3$ are shown in  \fig{fig:basbandsubsamp}. 

 \begin{figure}[!t]
\centering
\includegraphics[width =0.65\textwidth]{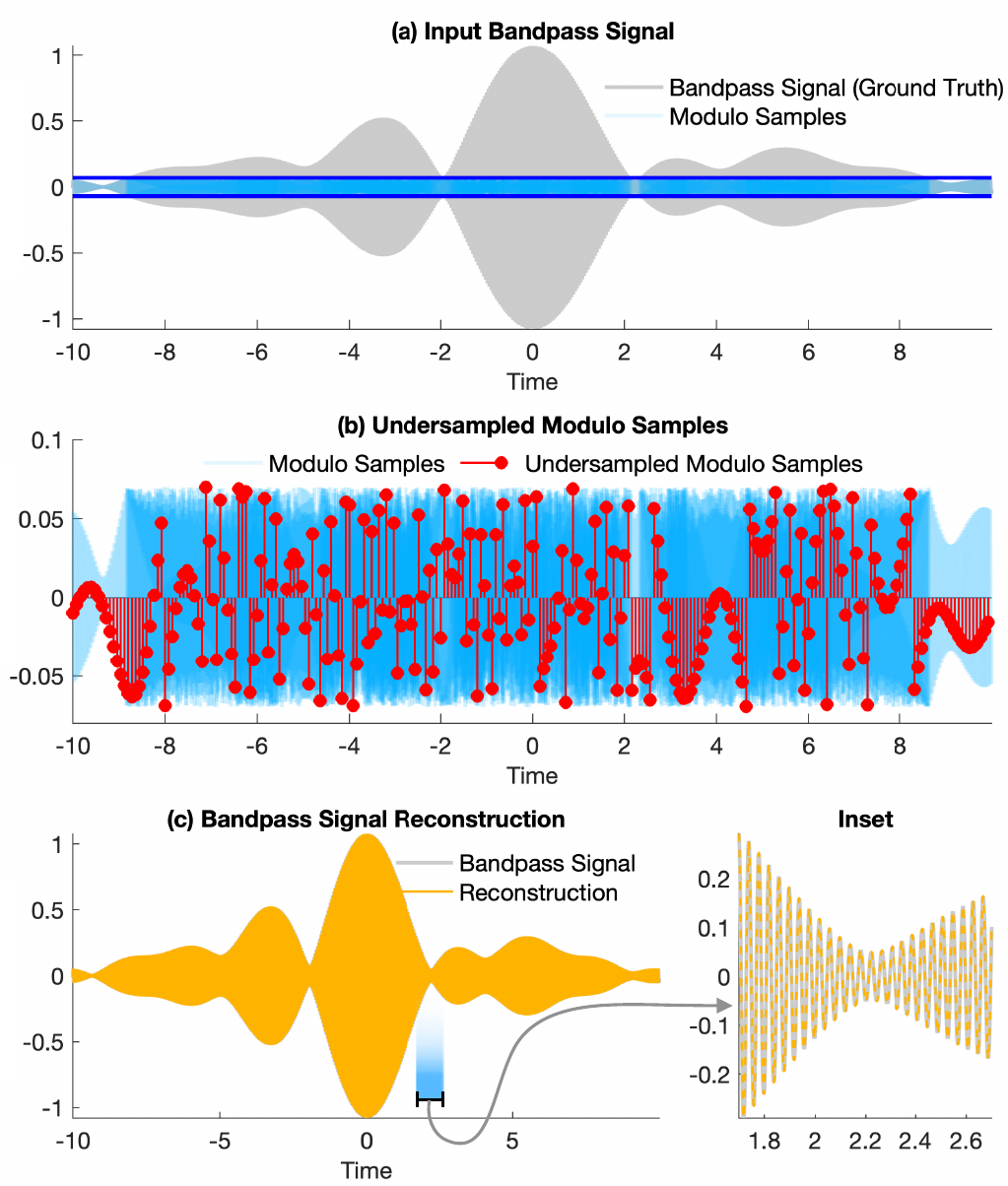}
\caption{Reconstruction of a bandpass signal with $\ol=50\pi, \ou=51\pi$ from its modulo samples at $\ts = 8\times10^{-2}$, where $\lambda=0.07$, $\cndp=5$. Reconstruction with Algorithm \ref{alg:1} achieves $\mse{\gamma_{\tbp}}{ \widetilde{\gamma}_{\tbp}^{\mathsf{US}}} = 2.06\times 10^{-34}$.}
\label{fig:basbandsubsamp}
\end{figure}

\bpara{---$\bullet$ Fourier-Domain Recovery.} To test the utility of our numerical algorithm, $1000$ random realizations of the signal $g_{\tbp}\in \mathcal{B}_{\left(199\pi, 200\pi \right)}$ and $\lambda \sim \texttt{U}\rob{0.02, 0.1}$ were generated. The maximum number of detected folds was $M=259$. Without undersampling, recovery of $g_{\tbp}$ using the \usfp algorithm requires a sampling period smaller than $4.4\times10^{-3}$. 
Based on \eqref{eq:outer}, we observe that with $\cndp=2$ the sampling period can be eased to $5.8\times10^{-3} \leq \ts \leq 10\times10^{-3}$.
We set $\ts = 10\times10^{-3}$, which relocates the bandpass signal to a baseband position \eqref{eq:basebandrates}. The maximal reconstruction error using Algorithm \ref{alg:2} is $\mse{\gamma_{\tbp}}{ \widetilde{\gamma}_{\tbp}^{\mathsf{FP}}} = 6.39\times 10^{-32}$.

\subsection{Hardware Experiments}
\bparab{Experimental Protocol.}
The results for non-ideal modulo architectures can be validated via hardware experiments, by achieving computational demodulation and recovery of a bandpass signal from its modulo samples.
The input signal follows,  
\begin{equation}
    g_{\tbp}\rob{t} =A_\mathsf{M}\left(1+\cos{\left( \Omega_\mathsf{M} t + \theta_\mathsf{M} \right) }\right) \sin{\left(\Omega_\mathsf{C} t+\theta_\mathsf{C}\right)}
    \label{eq:AM}
\end{equation} 
where $\left\{ A_\mathsf{M}, \Omega_\mathsf{M}, \theta_\mathsf{M} \right\}$ are the amplitude, frequency and phase of a message wave, and $\left\{\Omega_\mathsf{C}, \theta_\mathsf{C}\right\}$ are the frequency and phase of a carrier wave. The minimal and maximal frequencies of the bandpass signal in \eqref{eq:AM} are $\ol = \Omega_\mathsf{C} - \Omega_\mathsf{M}$ and $\ou = \Omega_\mathsf{C} + \Omega_\mathsf{M}$, respectively. In the experimental prototype, the signal \eqref{eq:AM} is generated by a \texttt{TTI--TG5011} function generator, and then fed into the \usadc and a $4$-channel \texttt{DSO--X 3024A} oscilloscope that simultaneously samples the input and the output of the \usadc to produce the ground-truth and the modulo samples. The sampling and reconstruction pipeline follows \fig{fig:pipeline}. For validation purposes, standard curve fitting techniques are used to estimate the parameters of \eqref{eq:AM}.

\spara{Exp.~1: Reconstruction from Non-ideal Modulo Samples.} 
To show demodulation and reconstruction using Algorithm \ref{alg:2}, a signal with a period of $\tau=299\ \mathrm{ms}$ that obeys \eqref{eq:AM} with the parameters $A_\mathsf{M}=3.72, \Omega_\mathsf{M}=43.98\ \mathrm{rad/s}, \theta_\mathsf{M}=2.64, \Omega_\mathsf{C}=628.31\times10^3\ \mathrm{rad/s}$, and $\theta_\mathsf{C}=-0.93\times10^{-2}\ \mathrm{rad}$ is generated. The signal was fed into the \usadc with a threshold $\lambda \approx 2.01 \pm 3/20$, where $\pm 3/20$ is a manually adjustable design parameter, and then sampled at period $\ts = 10\ \mathrm{ms}$. The produced modulo samples $y\sqb{k}$ have $\qup=3$ and $M=4$. As \eqref{eq:AM} is an AM signal, we validate that $\os = 2\pi / \ts = 200\pi\ \mathrm{rad/s}$ satisfies \eqref{eq:AMrates} with $\cndpAM=1$ and also satisfies \eqref{eq:AMK} since $\blb = 28\pi \leq 200\pi \leq 293\pi$. The conditions of Corollary \ref{cor:AMFP} are satisfied so we can use Algorithm \ref{alg:2} with input $y\sqb{k}$, to produce the recovered samples $\widetilde{\gamma}^{\mathsf{FP}}\sqb{k}$. The error compared to the sequence $\gamma\sqb{k} =g\rob{k \ts}$ is $\mse{\gamma}{\widetilde{\gamma}^{\mathsf{FP}}} = 7.1\times 10^{-3}$, which verifies that $y_\tbp\sqb{k} \approx y\sqb{k}$ when undersampled. We then use interpolation with $\mathrm{sinc}\rob{\os t}$ to get $\widetilde{g}^{\mathsf{FP}}\rob{t}$, and recover $\widetilde{g}_\tbp\rob{t}= \widetilde{g}^{\mathsf{FP}}\rob{t}\tfrac{\sin{\left(\Omega_\mathsf{C} t+\theta_\mathsf{C}\right)}}{\sin\theta_\mathsf{C}}$. The corresponding signals are shown in \fig{fig:exp3}.

\begingroup
\begin{table*}[!t]
\centering 
\renewcommand*{\arraystretch}{1.1}
\caption{Hardware Experiments--Summary of Parameters and Performance}
\label{tab:exp}
\resizebox{\textwidth}{!}{%
\begin{tabular}{|c|cccccccccccccccc|}
\hline\hline
Exp. & Fig.~No. & $\ts$ ($\mathrm{ms}$) & $K$ & $\tau$ ($\mathrm{ms}$) & $\ol (\mathrm{rad/s})$ & $\ou (\mathrm{rad/s})$ & $\rho$ & $A_\mathsf{M}$ & $\Omega_{\mathsf{M}} (\mathrm{rad/s})$ & $\theta_\mathsf{M} (\mathrm{rad})$ & $\Omega_{\mathsf{C}} (\mathrm{rad/s})$ & $\theta_\mathsf{C} (\mathrm{rad})$& $\mse{\gamma}{\widetilde{\gamma}}$ \\ \hline
$0$ & \fig{fig:exp0} &  $2.5$ & $24$ & $59.8$ & $2.29\times 10^{3}$ & $2.73\times 10^{3}$ & $2.49$ & $-2.502$ & $219.91$ & $1.147$ & $2.51\times 10^{3}$ & $-0.17$ & $1.7\times 10^{-3}$  \\
$1$ & \fig{fig:exp3} &  $10$ & $30$ & $299$ & $584.33$ & $672.30$ & $3.77$ & $3.72$ & $43.98$ & $2.644$ & $628.31$ & $-0.93$  &  $7.1\times 10^{-3}$ \\
$2$ & \fig{fig:gen_modulo} & $2.5$ & $24$ & $59.8$ & $2.29\times 10^{3}$ & $2.73\times 10^{3}$ & $2.6$ & $2.47$ & $216$ & $-1.57\times10^{-2}$ & $2.5\times10^{3}$  & $-1.57\times10^{-2}$ & $4.26\times 10^{-4}$  \\
\hline
\end{tabular}%
}
\end{table*}
\endgroup

 \begin{figure}[!t]
\centering
\includegraphics[width =0.6\textwidth]{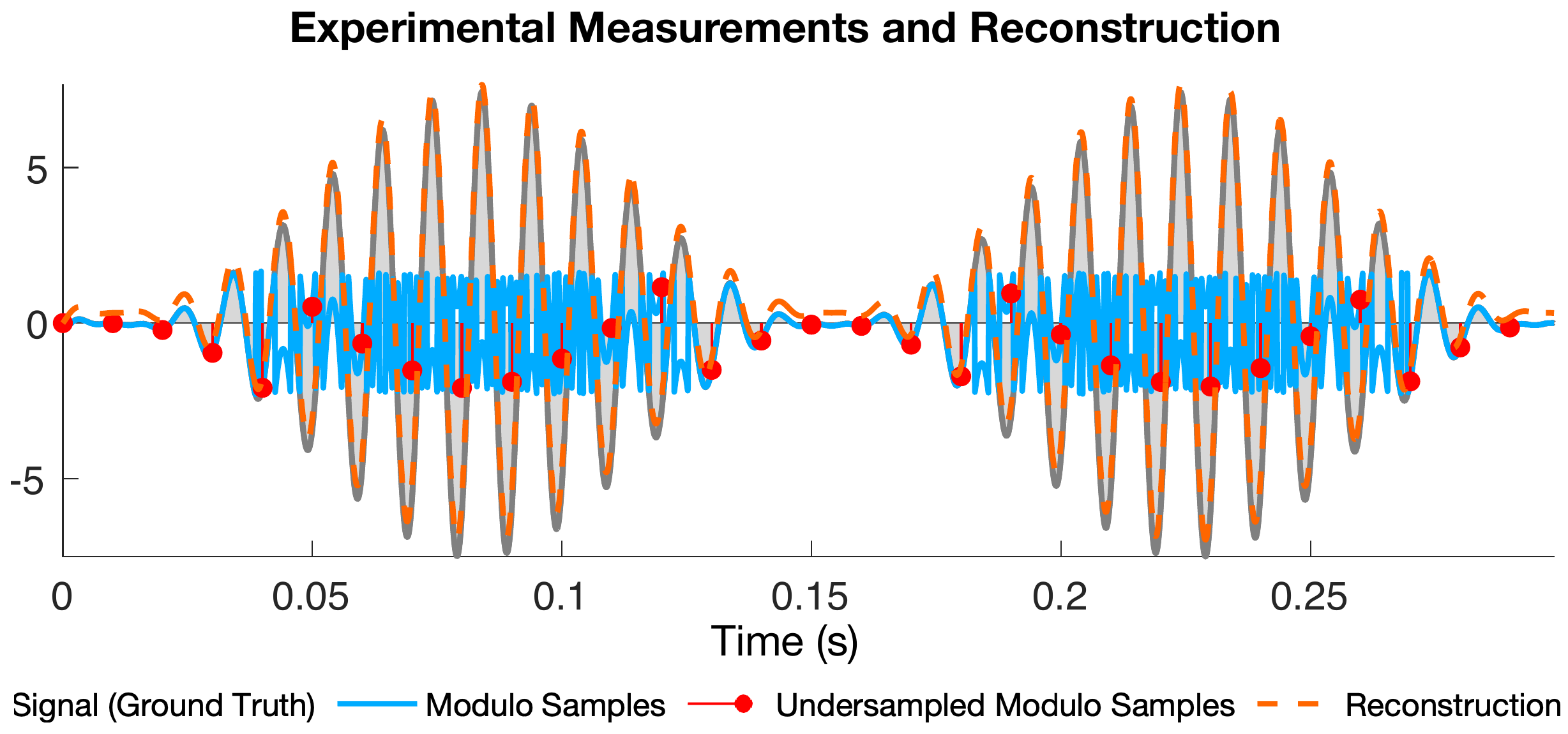}
\caption{Reconstruction from experimental data using \usadc. The experimental parameters are in \rftab{tab:exp} (Exp.~1).}
\label{fig:exp3}
\end{figure}

\begin{figure}[!t]
    \centering
    \includegraphics[width =0.65\textwidth]{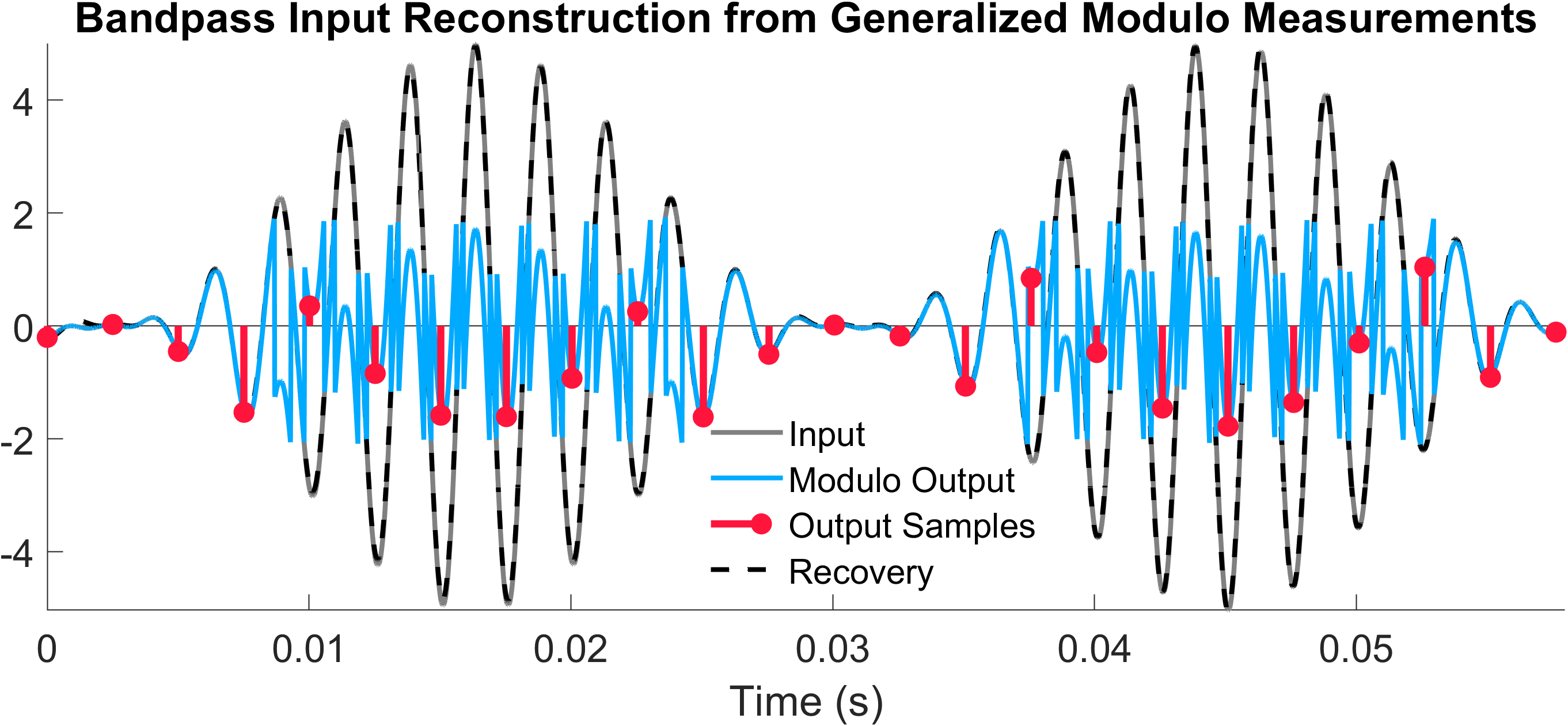}
    \caption{Reconstruction from experimental data using the generalized \madc. The experimental parameters are in \rftab{tab:exp} (Exp.~2).}
    \label{fig:gen_modulo}
\end{figure}

\spara{Exp.~2: Reconstruction From Generalized Modulo Nonlinearity.} The input is generated using \eqref{eq:AM} with  $\Omega_\mathsf{M}=216\ \mathrm{rad/s}, \Omega_\mathsf{C}=2.5\times10^3\ \mathrm{rad/s}, \theta_\mathsf{M}=\theta_\mathsf{C}=-1.57\times10^{-2}\ \mathrm{rad}$, and $A_\mathsf{M}=2.47$. The generalized modulo nonlinearity parameters are $\lambda=1.93, h=0.88$, leading to a ratio $\rho=2.6$ between the input amplitude and modulo threshold. The input is bandpass, satisfying $\{\ol,\ou\} =\{2.29, 2.73\}\times10^{3}$. The input and output of the modulo were sampled with period $\ts=2.5\times10^{-3}$, which is more than double the Nyquist sampling period of $\TNQ=1.2\ \mathrm{ms}$. This resulted in $K=24$ samples. The input $g_{\tbp}\rob{t}$, modulo output $z\rob{t}$ and output samples $y\sqb{k}=z\rob{k\ts}$ are depicted in \fig{fig:gen_modulo}. The samples of the baseband signal $g\rob{k\ts}$ were subsequently recovered using Algorithm \ref{alg:1} with threshold $\lambda_h$. The signal $g\rob{t}$ is reconstructed via interpolation with $\mathrm{sinc}\rob{\Omega t}$, where $\Omega=\tfrac{\pi}{\ts}$. The input $g_{\tbp}\rob{t}$, recovered as $\widetilde{g}_{\tbp}\rob{t}=\widetilde{g}\rob{t}\tfrac{\sin{\left(\Omega_\mathsf{C} t+\theta_\mathsf{C}\right)}}{\sin\theta_\mathsf{C}}$, is depicted in \fig{fig:gen_modulo}. The error between the recovered input and ground truth is $\mse{\gamma_{\tbp}}{\widetilde{\gamma}_{\tbp}}=4.26\times10^{-4}$.

\section{Conclusions and Future Work}
In this paper, we proposed a computational sensing-based solution for the efficient acquisition of bandpass signals with a high dynamic range. 
Our approach is based on modulo folding architectures, whose distinct advantages had been previously proven for bandlimited signal types, but such advantages have not yet been capitalized for bandpass signals. 
In the sequel, we have shown that the characteristics of a bandpass input signal can be leveraged to allow a unique mapping between bandpass and modulo samples at sub-Nyquist rates. Based on that, we provided new sampling conditions guaranteeing recovery in the case bandpass and periodic bandpass signals. Considering the practicalities of modulo \adcs, we analyzed three different folding architectures, namely, the ideal modulo, non-ideal modulo and generalized modulo. Furthermore, feasible recovery algorithms that show that recovery is possible at the reported rates in both time- and frequency-domains are presented as well. Our numerical experiments validated the underlying theoretical findings. Taking a step closer to practice, we validated our algorithms with hardware experiments, thus establishing their practical advantages.

The main message of this work is that folding architectures are suitable for undersampling-based acquisition. Several real-life signal processing systems implement undersampling using \adcs but suffer from dynamic range limitations, thus creating a research gap; our work on the other hand shows its relevancy to a wide range of real-life signal processing systems.  

\ipara{Future Work:} The next step forward will be a study of the framework's integration into different architectures that leverage bandpass sampling in hardware devices \cite{Fazi:1997:C, Kappes:2003:J,Seo:2003:J}, and the challenges that arise in the context. This may include

\begin{enumerate}[leftmargin = 30pt, label = ---$\bullet$]
\item  Multi-channel time-interleaved \adc, where the samples follow a non-uniform, periodic pattern and aliasing is present in the baseband signal as well \cite{Wahab:2022:J}.

\item Sigma-Delta \adcs, where a feedback loop is utilized for noise shaping around a centered frequency \cite{Kappes:2003:J}.

\item Software-defined radios, where algorithms are required to effectively filter and process multiple passbands.

\end{enumerate}

\section*{Appendix}

\begin{table}[!h]
\centering
\caption{Frequently Used Symbols}
\rowcolors{2}{blue!4}{white}
\renewcommand{\arraystretch}{1}
\begin{supertabular}{m{2cm}	 || m{10cm} l}
	Symbol & Definition \\ \hline  \hline 
	$\mathbbm{1}_{\mathcal{D}} \rob{t}$ & Indicator function on the domain $\mathcal{D}$. \\
	$\mathscr{M}_\lambda$ & Ideal modulo non-linearity with threshold $\lambda$.\\
	$\widetilde{\mathscr{M}}_\lambda$ & Non-ideal modulo non-linearity with threshold $\lambda$.\\
	$\MOh$ & Generalized modulo operator with $\boldsymbol{\mathsf{H}}=[\lambda,h,\alpha]$ where $h$ is the hysteresis parameter and $\alpha$ is the transient duration.\\	
	$g_{\tbp}(t),g(t)$ & Bandpass and corresponding bandlimited input signals.\\
	$\os, \ts$ & Sampling frequency $(\mathrm{rad/s})$ and period $\ts = 2\pi / \os$ ($\mathrm{s}$).\\
	$\gamma[k], \gamma_{\tbp}[k]$ & Input samples $\gamma_{\tbp}[k]=g_{\tbp}(k\ts), \gamma[k]=g(k\ts)$.\\
	$y[k], y_{\tbp}[k]$ & Modulo samples of lowpass and bandpass functions. \\
	$r[k], r_{\tbp}[k]$ & Residual samples of lowpass and bandpass functions. \\
	$\bar{f} $ & First order difference, $\bar{f}\sqb{k}=\Delta f = f\sqb{k+1}-f\sqb{k}$.\\
	$\Delta^N(\cdot)$ & Finite difference operator of order $N$.\\
	$\ft{f}{\omega}$     &    Continuous-time Fourier transform (FT) of $f\rob{t}$.\\
	$\ftsub{f}{\delta}{\omega}$ & Normalized Fourier series (FS) with coefficients $f[k]$. \\
	$\dft{f}{n}$    &     Discrete Fourier transform (DFT) of the sequence $f\sqb{k}$. \\
	$\ol, \ou$ & Minimum and maximum frequencies of a function. \\
	$\blb$ & Passband bandwidth $\blb = \ou - \ol$. \\
	$\oup$ & Bandwidth of baseband function $g$ computed via \eqref{eq:USLM}. \\
	$\ql, \qu, \qup$ & Indices for $\ol,\ou,\oup$ on a discrete, normalized  axis \eqref{eq:QLQU}. \\
	$\mathcal{B}_{\Omega}$    &     Space of $\Omega$-bandlimited functions.\\
	$\mathcal{B}_{\left(\ol, \ou \right)}$ \vspace{8pt} & Space of functions with spectrum inside $[\ol,\ou]$\vspace{4pt}.\\
	\rowcolor{white}
	$\mse{\mathbf{x}}{\mathbf{y}}$ & Mean squared error between vectors $\mat{x}$  and $\mat{y}$.	\\
\end{supertabular}
\label{tab:my-table}
\end{table}

\bpara{Notation.} The sets of real, integer and natural numbers are denoted by $\R$, $\Z$, $\N$ respectively. We denote by $f\rob{t}$, $t \in \R$ continuous-time functions and by $f\sqb{k}$, $k \in \Z$ discrete-time sequences. The $p$-norm for functions in $f\in\LL{p}$ is denoted as $\nrm{f}{p}$, for $1\leq p<\infty$. The Fourier series (FS) representation of a $\tau$-periodic $f\rob{t}$ is,
\begin{equation*}
    f\rob{t} \DE \sum\limits_{k \in \Z} f_k e^{-\jmath \tfrac{2\pi}{\tau} t k}, \quad f_k = \tfrac{2\pi}{\tau}\int_{\tau/2}^{\tau/2}{f\rob{t}}e^{\jmath \tfrac{2\pi}{\tau} t k}dt.
\end{equation*}
The notation $\ft{f}{\omega}$ is used for the Fourier transform of a continuous-time function $f\rob{t} \in \LL{1}$. The sampled or discrete Fourier transform (DFT) of sequence $\left\{ f\sqb{k} \right\} _{k=0}^{K-1}$ is $\dft{f}{n} \DE \sum\nolimits_{k=0}^{K-1} {f\sqb{k}{e^{ - \jmath \frac{2\pi}{K} k n}}}$.
The Paley-Wiener class is denoted by $\mathsf{PW}_{\Omega}$, that is, $\BL{f} \cap \LL{2}$. The Dirac distribution is denoted $\delta\rob{t}$. The first order difference sequence is defined as $\bar{f}\sqb{k}=\rob{\Delta f}\sqb{k} = f\sqb{k+1} - f\sqb{k}$, and $\rob{\Delta^N f}\sqb{k} = \rob{\Delta\rob{\Delta^{N-1} f}}\sqb{k}$ is the $N^{\textrm{th}}$ order difference sequence. The mean squared error (MSE) between $\mathbf{x},\mathbf{y} \in \R^{K}$  is  $\mse{\mathbf{x}}{\mathbf{y}} \DE \frac{1}{K} \sum_{k=0}^{K-1}\left| x\sqb{k} - y\sqb{k} \right|^2$.

\ifCLASSOPTIONcaptionsoff
  \newpage
\fi


\begin{thebibliography}{10}
\providecommand{\url}[1]{#1}
\csname url@samestyle\endcsname
\providecommand{\newblock}{\relax}
\providecommand{\bibinfo}[2]{#2}
\providecommand{\BIBentrySTDinterwordspacing}{\spaceskip=0pt\relax}
\providecommand{\BIBentryALTinterwordstretchfactor}{4}
\providecommand{\BIBentryALTinterwordspacing}{\spaceskip=\fontdimen2\font plus
\BIBentryALTinterwordstretchfactor\fontdimen3\font minus
  \fontdimen4\font\relax}
\providecommand{\BIBforeignlanguage}[2]{{%
\expandafter\ifx\csname l@#1\endcsname\relax
\typeout{** WARNING: IEEEtran.bst: No hyphenation pattern has been}%
\typeout{** loaded for the language `#1'. Using the pattern for}%
\typeout{** the default language instead.}%
\else
\language=\csname l@#1\endcsname
\fi
#2}}
\providecommand{\BIBdecl}{\relax}
\BIBdecl

\bibitem{Bhandari:2017:C}
\BIBentryALTinterwordspacing
A.~Bhandari, F.~Krahmer, and R.~Raskar, ``On unlimited sampling,'' in
  \emph{Intl. Conf. on Sampling Theory and Applications (SampTA)}, Jul. 2017.
\BIBentrySTDinterwordspacing

\bibitem{Bhandari:2020:Pat}
------, ``Methods and apparatus for modulo sampling and recovery,'' US Patent
  US10\,651\,865B2, May, 2020.

\bibitem{Bhandari:2020:Ja}
------, ``On unlimited sampling and reconstruction,'' \emph{{IEEE} Trans.
  Signal Process.}, vol.~69, pp. 3827--3839, Dec. 2020.

\bibitem{Bhandari:2021:J}
A.~Bhandari, F.~Krahmer, and T.~Poskitt, ``Unlimited sampling from theory to
  practice: {Fourier}-{Prony} recovery and prototype {ADC},'' \emph{{IEEE}
  Trans. Signal Process.}, vol.~70, pp. 1131--1141, Sep. 2021.

\bibitem{Bhandari:2022:J}
A.~Bhandari, ``Back in the {US}-{SR}: {Unlimited} sampling and sparse
  super-resolution with its hardware validation,'' \emph{{IEEE} Signal Process.
  Lett.}, vol.~29, pp. 1047--1051, Mar. 2022.

\bibitem{Florescu:2022:Ja}
D.~Florescu and A.~Bhandari, ``Time encoding via unlimited sampling: Theory,
  algorithms and hardware validation,'' \emph{{IEEE} Trans. Signal Process.},
  vol.~70, pp. 4912--4924, Oct. 2022.

\bibitem{Bhandari:2022:Book}
A.~Bhandari, A.~Kadambi, and R.~Raskar, \emph{Computational Imaging},
  1st~ed.\hskip 1em plus 0.5em minus 0.4em\relax MIT Press, Oct. 2022, {OA}
  URL: https://imagingtext.github.io/.

\bibitem{Bhandari:2018:C}
A.~Bhandari, F.~Krahmer, and R.~Raskar, ``Unlimited sampling of sparse
  sinusoidal mixtures,'' in \emph{{IEEE} Intl. Sym. on Information Theory
  ({ISIT})}, Jun. 2018.

\bibitem{Shtendel:2022:C}
G.~Shtendel and A.~Bhandari, ``{HDR}-{ToF}: {HDR} time-of-flight imaging via
  modulo acquisition,'' in \emph{{IEEE} Intl. Conf. on Image Processing
  ({ICIP})}, Oct. 2022, pp. 3808--3812.

\bibitem{Bhandari:2022:C}
A.~Bhandari, ``Unlimited sampling with sparse outliers: {Experiments} with
  impulsive and jump or reset noise,'' in \emph{{IEEE} Intl. Conf. on
  Acoustics, Speech and Sig. Proc. (ICASSP)}, May 2022, pp. 5403--5407.

\bibitem{Beckmann:2022:J}
M.~Beckmann, A.~Bhandari, and F.~Krahmer, ``The modulo radon transform: Theory,
  algorithms, and applications,'' \emph{{SIAM} Journal on Imaging Sciences},
  vol.~15, no.~2, pp. 455--490, Apr. 2022.

\bibitem{Liu:2022}
Z.~Liu, A.~Bhandari, and B.~Clerckx, ``$\lambda$-{MIMO}: {Massive MIMO} via
  modulo sampling,'' Arxiv Preprint (arXiv:2210.10193), Oct. 2022.

\bibitem{Florescu:2022:J}
D.~Florescu, F.~Krahmer, and A.~Bhandari, ``The surprising benefits of
  hysteresis in unlimited sampling: {Theory}, algorithms and experiments,''
  \emph{{IEEE} Trans. Signal Process.}, vol.~70, pp. 616--630, Jan. 2022.

\bibitem{Florescu:2022:Cb}
D.~Florescu and A.~Bhandari, ``Unlimited sampling via generalized
  thresholding,'' in \emph{Intl. Sym. on Info. Theory ({ISIT})}.\hskip 1em plus
  0.5em minus 0.4em\relax {IEEE}, Jun. 2022.

\bibitem{Florescu:2022:C}
------, ``Unlimited sampling with local averages,'' in \emph{{IEEE} Intl. Conf.
  on Acoustics, Speech and Signal Processing (ICASSP)}, May 2022, pp.
  5742--5746.

\bibitem{Florescu:2022:Ca}
------, ``Modulo event-driven sampling: System identification and hardware
  experiments,'' in \emph{{IEEE} Intl. Conf. on Acoustics, Speech and Signal
  Processing (ICASSP)}, May 2022, pp. 5747--5751.

\bibitem{Florescu:2021:C}
D.~Florescu, F.~Krahmer, and A.~Bhandari, ``Event-driven modulo sampling,'' in
  \emph{{IEEE} Intl. Conf. on Acoustics, Speech and Signal Processing
  (ICASSP)}, Jun. 2021.

\bibitem{Ordentlich:2018:J}
O.~Ordentlich, G.~Tabak, P.~K. Hanumolu, A.~C. Singer, and G.~W. Wornell, ``A
  modulo-based architecture for analog-to-digital conversion,'' \emph{{IEEE} J.
  Sel. Topics Signal Process.}, pp. 825--840, Aug. 2018.

\bibitem{Romanov:2019:J}
E.~Romanov and O.~Ordentlich, ``Above the {Nyquist} rate, modulo folding does
  not hurt,'' \emph{{IEEE} Signal Process. Lett.}, vol.~26, no.~8, pp.
  1167--1171, Aug. 2019.

\bibitem{Rudresh:2018:C}
\BIBentryALTinterwordspacing
S.~Rudresh, A.~Adiga, B.~A. Shenoy, and C.~S. Seelamantula, ``Wavelet-based
  reconstruction for unlimited sampling,'' in \emph{{IEEE} Intl. Conf. on
  Acoustics, Speech and Sig. Proc. ({ICASSP})}, Apr. 2018, pp. 4584--4588.
\BIBentrySTDinterwordspacing

\bibitem{Musa:2018:C}
O.~Musa, P.~Jung, and N.~Goertz, ``Generalized approximate message passing for
  unlimited sampling of sparse signals,'' in \emph{{IEEE} Global Conf. on Sig.
  and Info. Proc. ({GlobalSIP})}.\hskip 1em plus 0.5em minus 0.4em\relax
  {IEEE}, Nov. 2018.

\bibitem{Prasanna:2021:J}
D.~Prasanna, C.~Sriram, and C.~R. Murthy, ``On the identifiability of sparse
  vectors from modulo compressed sensing measurements,'' \emph{{IEEE} Signal
  Process. Lett.}, vol.~28, pp. 131--134, Jan. 2021.

\bibitem{Ordonez:2021:J}
L.~G. Ordonez, P.~Ferrand, M.~Duarte, M.~Guillaud, and G.~Yang, ``On
  full-duplex radios with modulo-{ADCs},'' \emph{IEEE Open Journal of the
  Communications Society}, vol.~2, pp. 1279--1297, Jun. 2021.

\bibitem{Gong:2021:J}
Y.~Gong, L.~Gan, and H.~Liu, ``Multi-channel modulo samplers constructed from
  gaussian integers,'' \emph{{IEEE} Signal Process. Lett.}, vol.~28, pp.
  1828--1832, Aug. 2021.

\bibitem{Azar:2022:C}
E.~Azar, S.~Mulleti, and Y.~C. Eldar, ``Residual recovery algorithm for modulo
  sampling,'' in \emph{{IEEE} Intl. Conf. on Acoustics, Speech and Sig. Proc.
  (ICASSP)}.\hskip 1em plus 0.5em minus 0.4em\relax {IEEE}, May 2022.

\bibitem{Weiss:2022:J}
A.~Weiss, E.~Huang, O.~Ordentlich, and G.~W. Wornell, ``Blind modulo
  analog-to-digital conversion,'' \emph{{IEEE} Trans. Signal Process.},
  vol.~70, pp. 4586--4601, Aug. 2022.

\bibitem{Bhandari:2020:C}
A.~Bhandari and F.~Krahmer, ``{HDR} imaging from quantization noise,'' in
  \emph{{IEEE} Intl. Conf. on Image Processing ({ICIP})}, Oct. 2020, pp.
  101--105.

\bibitem{Guo:2023:C}
R.~Guo and A.~Bhandari, ``{ITER}-{SIS}: {Robust} unlimited sampling via
  iterative signal sieving,'' in \emph{IEEE Intl. Conf. on Acoustics, Speech
  and Sig. Proc. ({ICASSP})}, Jun. 2023.

\bibitem{Brown:1983:J}
J.~L. Brown, ``On uniform sampling of amplitude modulated signals,''
  \emph{{IEEE} Trans. Aerosp. Electron. Syst.}, vol. {AES}-19, no.~4, pp.
  633--635, Jul. 1983.

\bibitem{Vaughan:1991:J}
R.~Vaughan, N.~Scott, and D.~White, ``The theory of bandpass sampling,''
  \emph{{IEEE} Trans. Signal Process.}, vol.~39, no.~9, pp. 1973--1984, Sep. 1991.

\bibitem{Wahab:2022:J}
M.~Wahab and B.~C. Levy, ``Direct complex envelope sampling of bandpass signals
  with m-channel time-interleaved {ADCs},'' \emph{{IEEE} Trans. Signal
  Process.}, vol.~70, pp. 4804--4819, Oct. 2022.

\bibitem{Grace:1968:J}
O.~D. Grace and S.~P. Piti, ``Quadrature sampling of high-frequency
  waveforms,'' \emph{The Journal of the Acoustical Society of America},
  vol.~44, no.~5, pp. 1453--1454, Nov. 1968.

\bibitem{Knight:1981:J}
W.~Knight, R.~Pridham, and S.~Kay, ``Digital signal processing for sonar,''
  \emph{Proc. {IEEE}}, vol.~69, no.~11, pp. 1451--1506, Nov. 1981.

\bibitem{Waters:1982:J}
W.~Waters and B.~Jarrett, ``Bandpass signal sampling and coherent detection,''
  \emph{{IEEE} Trans. Aerosp. Electron. Syst.}, vol. {AES}-18, no.~6, pp.
  731--736, Nov. 1982.

\bibitem{Rice:1982:J}
D.~Rice and K.~Wu, ``Quadrature sampling with high dynamic range,''
  \emph{{IEEE} Trans. Aerosp. Electron. Syst.}, vol. {AES}-18, no.~6, pp.
  736--739, Nov. 1982.

\bibitem{Kang:2022:J}
J.~Kang, H.~Yoon, C.~Yoon, and S.~Y. Emelianov, ``High-frequency ultrasound
  imaging with sub-{N}yquist sampling,'' \emph{{IEEE} Trans. Ultrason.,
  Ferroelectr., Freq. Control}, vol.~69, no.~6, pp. 2001--2009, Jun. 2022.

\bibitem{Hirabayashi:2002:J}
A.~Hirabayashi, H.~Ogawa, and K.~Kitagawa, ``Fast surface profiler by
  white-light interferometry by use of a new algorithm based on sampling
  theory,'' \emph{Applied Optics}, vol.~41, no.~23, p. 4876, Aug. 2002.

\bibitem{Larkin:1996:J}
K.~G. Larkin, ``Efficient nonlinear algorithm for envelope detection in white
  light interferometry,'' \emph{Journal of the Optical Society of America A},
  vol.~13, no.~4, p. 832, Apr. 1996.

\bibitem{Fazi:1997:C}
C.~Fazi and P.~G. Neudeck, ``Wide dynamic range {RF} mixers using wide-bandgap
  semiconductors,'' in \emph{{IEEE} {MTT}-S International Microwave Symposium
  Digest}.\hskip 1em plus 0.5em minus 0.4em\relax {IEEE}, Jun. 1997.

\bibitem{Kappes:2003:J}
M.~S. Kappes, ``A $2.2$-{mW} {CMOS} bandpass continuous-time multibit
  {$\Delta$}-{$\Sigma$} {ADC} with $68$ {dB} of dynamic range and $1$-{MHz}
  bandwidth for wireless applications,'' \emph{{IEEE} J. Solid-State Circuits},
  vol.~38, no.~7, Jul. 2003.

\bibitem{Seo:2003:J}
H.-M. Seo, C.-G. Woo, and P.~Choi, ``Relationship between {ADC} performance and
  requirements of digital-{IF} receiver for {WCDMA} base-station,''
  \emph{{IEEE} Trans. Veh. Technol.}, vol.~52, no.~5, pp. 1398--1408, Sep.
  2003.

\bibitem{Feuillen:2022:C}
T.~Feuillen, M.~Alaee-Kerahroodi, A.~Bhandari, M.~R.~B. Shankar, and B.~Ottersten,
  ``Unlimited sampling for {FMCW} radars: {A} proof of concept,'' in
  \emph{{IEEE} Radar Conf. ({RadarConf}22)}, Mar. 2022.



\bibitem{Feuillen:2023:C}
T.~Feuillen, M.~R.~B. Shankar, and A.~Bhandari, ``Unlimited sampling radar: Life below
  the quantization noise,'' in \emph{IEEE Intl. Conf. on Acoustics, Speech and
  Sig. Proc. ({ICASSP})}.\hskip 1em plus 0.5em minus 0.4em\relax {IEEE}, Jun.
  2023.

\bibitem{Pursley:2002}
M.~B. Pursley, ``Analog communications,'' in \emph{Reference Data for
  Engineers}.\hskip 1em plus 0.5em minus 0.4em\relax Elsevier, 2002, pp.
  23--1--23--19.

\end{thebibliography}
\end{document}